\documentclass[12pt,a4paper]{article}
\usepackage{amsmath,amssymb,amsfonts,amsthm}
\usepackage{graphicx}
\usepackage[margin=1in]{geometry}
\usepackage{hyperref}
\usepackage[utf8]{inputenc}

\newtheorem{proposition}{Proposition}

\newtheorem{definition}{Definition}

\begin{document}

\author{
Nazaria Solferino\\
Department of Economics, Statistics and Business\\
Universitas Mercatorum, Rome\footnote{Corresponding Author: nazaria.solferino@unimercatorum.it}}

\title{Platform-Driven Hate Speech: An Epidemiological Model with Optimal Taxation}

\date{}
\maketitle

\begin{abstract}
Online hate speech is a global challenge amplified by engagement‑driven social media algorithms. This paper develops an epidemiological model of hate speech propagation capturing the strategic interaction between a profit‑maximizing platform and a welfare‑maximizing government. The platform's profit depends on the prevalence of hate speech and on its own algorithmic reactivity, creating a feedback loop between the epidemic and economic incentives. The government sets an optimal tax on amplification to internalize the social costs, balancing the benefit of tax revenue against the deadweight loss of taxation. The Stackelberg equilibrium is characterised analytically and solved numerically. The optimal tax reduces hate speech prevalence, eliminates bistability, and lowers victim harm.

\vspace{0.5cm}
\textbf{Keywords:} Hate speech, social media, epidemiological model, algorithmic amplification, platform regulation, Stackelberg equilibrium

\textbf{JEL Codes:} D62, L12, L86, C61, C73
\end{abstract}

\vspace{1cm}

\section{Introduction}
\label{sec:intro}

The digital transformation of public discourse has brought undeniable benefits, but it has also enabled the rapid spread of hate speech on a global scale. According to the Areto Hate Speech Index, the prevalence of hateful content across major social media platforms increased by 58\% between the first quarter of 2024 and the first quarter of 2025~\cite{areto2025index}. In the United States, the Anti‑Defamation League reports that 22\% of Americans experienced severe online harassment in 2021,~\cite{adl2021report}, with similar trends persisting in subsequent years. The economic cost of hate crimes in the US is estimated at nearly \$3.4 billion annually~\cite{martell2023economic}, while individual victims face out‑of‑pocket costs exceeding \$31,500 per incident, with mental health support being the most common unmet need~\cite{phi2025impacts}. Globally, online hate speech increased by 38\% between 2023 and 2025, and regions with the highest online toxicity recorded a 27\% rise in related offline hate crimes, a correlation deemed statistically significant~\cite{gdsc2026report}.

These statistics are not merely abstract numbers; they reflect real harm to real people. Victims of online hate speech experience increased levels of anxiety, depression, and social withdrawal. A 2025 systematic review by Madriaza et al.~\cite{madriaza2025exposure} synthesised evidence from over fifty studies, finding significant negative effects of exposure to hate speech on psychological well‑being, particularly among women and marginalised groups. Crucially, the review also noted that bystander interventions and counter‑narratives can foster resilience, a finding that underscores the importance of policies aimed at reducing exposure to hate speech.

A growing body of empirical research suggests that social media platforms may play an active role in the diffusion of harmful content, beyond merely hosting it. For instance, Ribeiro et al.~\cite{ribeiro2020auditing} document how recommendation systems can guide users toward increasingly extreme content. More recent investigations provide evidence that algorithmic curation may facilitate the visibility and spread of extremist material, particularly for users already engaging with similar content~\cite{tiktok2021neonazi}. In a related study, the ADL Center for Technology and Society~\cite{ADL2023amplification} finds that several major platforms tend to recommend additional hateful content following initial exposure. Chadwick et al.~\cite{Chadwick2025amplification} provide further evidence on how platform use, motivations, and affect contribute to the amplification of false and exaggerated news on social media.

As noted by Walther~\cite{walther2024social}, people post hate messages to garner signals of social approval, and algorithms that prioritise engagement magnify this effect. Brady et al.~\cite{brady2017loop} demonstrate that emotion shapes the diffusion of moralized content in social networks, creating a dangerous feedback loop: engagement‑maximizing algorithms amplify the very content that harms users and society, while platforms profit from the resulting attention. Popa‑Wyatt~\cite{popa2023} further argues that hate functions as an infectious disease and that social media acts as its promoter.

In response to this crisis, regulators worldwide have begun to act. The European Union's Digital Services Act (DSA), which entered into force in 2024, represents the most comprehensive attempt to regulate online platforms. On 20 January 2025, the European Commission integrated the revised Code of Conduct on Countering Illegal Hate Speech Online into the DSA under Article 45~\cite{ec2025code}. This revised code introduces new obligations for platforms, including a requirement to provide country‑level data on hate speech classification and to create a network of monitoring reporters. In the United States, New York's ``Stop Hiding Hate Act'' (2025) requires platforms to submit biannual reports on content moderation~\cite{nys2025stophiding}. In the United Kingdom, Ofcom launched an investigation in December 2025 into whether major platforms are doing enough to remove illegal terror and hate content~\cite{ofcom2025investigation}.

Despite these regulatory efforts, there is a striking lack of formal mathematical models that capture the strategic interaction between platforms and regulators in the context of hate speech propagation. Most existing epidemiological models of misinformation and hate speech treat platform behaviour as exogenous or ignore it altogether~\cite{popa2023}. This paper aims to fill that gap by developing a dynamic model in which a monopolistic platform chooses its algorithmic reactivity to maximize profit, while a social planner sets an optimal tax on amplification. Our model deliberately strips away complexities such as superspreaders, adaptive learning, and education spending to isolate the core mechanism through which platform amplification fuels hate speech and victimisation, and how taxation alone can correct the resulting externality. The modelling strategy is inspired by the epidemic‑theoretic approach to misinformation proposed by Solferino~\cite{solferino2026fake}, but adapted to the specific features of hate speech, with particular attention to the modelling of victims.

Our main contributions are: (i) a parsimonious SVIR‑type model where victims are generated proportionally to the prevalence of hate speech; (ii) a rigorous stability and bifurcation analysis, including the derivation of the basic reproduction number and the proof of a supercritical transcritical bifurcation; (iii) the characterisation of bistability induced by algorithmic reactivity; (iv) the analytical and numerical solution of the Stackelberg equilibrium with an optimal tax on amplification that explicitly balances the fiscal benefit of taxation against its deadweight loss; (v) a comprehensive sensitivity analysis demonstrating the robustness of the optimal policy; (vi) policy implications grounded in recent regulatory frameworks.

A distinctive feature of our model is the engagement with the ethical critique of Big Tech articulated by Pope Leo XIV in his May 2025 encyclical \textit{Magnifica Humanitas}~\cite{leoxiv2025magnifica}. The encyclical denunces three structural failures of engagement‑driven platforms: the ``dictatorship of the algorithm,'' which amplifies divisive content over reasoned discourse; the ``commodification of human attention,'' which treats users as resources to be extracted rather than persons to be served; and the dangerous concentration of communicative power in the hands of a few corporations. While the encyclical is a theological document, its analysis of the economic incentives driving algorithmic amplification resonates strongly with the formal structure of our model.

The remainder of the paper is organized as follows. Section 2 reviews the relevant literature. Section 3 presents the model. Section 4 carries out the analytical study. Section 5 analyses the optimal taxation derived from the Stackelberg game. Section 6 provides numerical simulations. Section 7 discusses policy implications. Section 8 concludes.

\section{Literature Review}
\label{sec:lit}

\subsection{Epidemiological models of information diffusion}
The application of epidemiological models to social phenomena has a long and distinguished history. Early work by Reich~\cite{reich2006diffusion} applied SIR‑type models to the diffusion of innovations in social networks, demonstrating that the spread of new ideas and behaviours follows patterns analogous to infectious diseases. More recently, Badr et al.~\cite{badr2021sir} adapted these frameworks to model the diffusion of ideas in social networks. Mazzarisi et al.~\cite{mazzarisi2026} developed a tractable model capturing the rise and fall of ideas' popularity with endogenous mechanisms. These models established the methodological foundation for applying compartmental models to online phenomena, but they did not address hate speech specifically, nor did they incorporate platform behaviour as a strategic choice variable.

Specific to hate speech, Amballoor and Naik~\cite{sir2024hate} demonstrated that the SIR model can be effectively applied to understand the dynamics of hate speech and fake news propagation. Teklu and Abebaw~\cite{teklu2024optimal} formulated a co‑existence model of hate speech and racism with optimal control strategies, investigating the effects of protection and rehabilitation interventions. Hailu and Teklu~\cite{fractional2024hate} extended this approach using fractional‑order derivatives to capture memory effects in hate speech dissemination. Vaidya et al.~\cite{toxicity2024spread} applied an SEIR approach to analyze the spread of toxicity on Twitter. Popa‑Wyatt~\cite{popa2023} provided a philosophical analysis of online hate, arguing that it can be understood as an infectious disease promoted by social media. However, none of these models endogenises the platform's choice of algorithmic amplification, which is the central contribution of our paper.

A closely related strand of literature models the spread of fake news and misinformation. Solferino~\cite{solferino2026fake} extended this framework to incorporate platform intervention and optimal education policies as relevant variables. Our model builds on this tradition but differs in two important respects: first, we focus on hate speech rather than generic misinformation, which requires modelling the specific harm to victims; second, we treat the platform as a strategic player with its own objective function, rather than as a passive instrument of the planner.

\subsection{Platform accountability and algorithmic amplification}
The accountability of platforms for algorithmic amplification has become a major research area in recent years. Peterson‑Salahuddin~\cite{peterson2024repairing} proposed an ``algorithmic reparations'' approach to hate speech content moderation, arguing that current automated systems often contain biases that silence marginalised users while amplifying hateful content. Reynolds and Hallinan~\cite{reynolds2024user} introduced the concept of ``user‑generated accountability,'' showing how content creators navigate algorithmic governance on YouTube by generating publicity to reveal platform failures. These studies highlight the power asymmetry between platforms and users, a feature that our Stackelberg game captures formally.

Park and Rohatagi~\cite{park2024balancing} argued for ``amplification regulation'' to mitigate the spread of harmful but legal content online, holding platforms accountable for the role of their recommender systems. The accountability paradox, where platforms increasingly rely on AI systems while restricting independent oversight, was explored by the Accountability Paradox Research Group~\cite{accountability2024paradox}. Empirically, the ADL Center for Technology and Society~\cite{ADL2023amplification} documented how major platforms actively recommend hateful material after initial engagement. Chadwick et al.~\cite{Chadwick2025amplification} provide further evidence on the roles of platform use, motivations, affect, and ideology in the amplification of false and exaggerated news.

Emotional contagion online has been extensively documented. Del Vicario et al.~\cite{delvicario2016emotional} showed that emotional content spreads faster and more broadly than neutral content on social media, with anger and outrage being particularly contagious. Kramer et al.~\cite{kramer2014experimental} provided experimental evidence of massive‑scale emotional contagion, demonstrating that emotional states can be transferred to others without direct interaction, purely through exposure to emotional content in one's feed.

The business model underpinning algorithmic amplification has also attracted scrutiny. Walther~\cite{walther2024social} provided theoretical explication that social approval signals on social media incentivise the production of hateful content. Brady et al.~\cite{brady2017loop} demonstrated that emotion shapes the diffusion of moralized content, with moral outrage being particularly contagious and engaging. These insights motivate our modelling of the platform's profit function, where engagement with hate speech translates directly into advertising revenue.

\subsection{Optimal control and taxation}
Optimal control theory has been fruitfully applied to epidemiological models. In the context of education diffusion, Ramponi and Tessitore~\cite{ramponi2024optimal} characterised the optimal social and vaccination control policy for an SVIR epidemic model. Teklu and Abebaw~\cite{teklu2024optimal} extended this approach to hate speech and racism co‑existence models, considering protection and rehabilitation as control variables.

Our paper contributes to this literature by introducing a Pigouvian tax as the control instrument, a natural choice when the externality is generated by a profit‑maximising firm. Pigouvian taxation as a corrective instrument for externalities dates back to Pigou's seminal work~\cite{pigou1920economics}. Its application to platform amplification is novel but builds on the well‑established principle of internalizing negative externalities. Recent work on the economic costs of hate speech~\cite{martell2023economic} has provided a quantitative basis for calibrating such taxes, estimating the total economic cost of hate crimes in the United States at nearly \$3.4 billion annually. Our model operationalises this insight by treating the social cost of hate speech as the externality to be internalized through the optimal tax.

This paper bridges these four strands of literature. While each of these fields has produced important insights, no existing contribution combines them into a unified framework in which the platform's algorithmic reactivity is an endogenous strategic variable, and in which the government's optimal tax is derived from an explicit welfare function that includes both the fiscal benefit of tax revenues and the deadweight loss of taxation. The present paper fills this gap by developing a fully micro‑founded Stackelberg model that integrates epidemiological dynamics, platform economics, and public finance. This novel framework provides quantitatively grounded policy implications that are robust to parameter uncertainty, and it offers a rigorous foundation for the design of financial disincentives to algorithmic amplification of hate speech.

\section{The Model}
\label{sec:model}

\subsection{The baseline epidemiological assumptions}
We consider a unit‑mass population divided into five compartments. Susceptible individuals, denoted by $S(t)$, are those who have not yet been exposed to hate speech. Vaccinated individuals, denoted by $V(t)$, possess cognitive protection—through education, critical thinking, or counter‑narrative exposure—that makes them less susceptible to becoming spreaders, though they may still be vulnerable to victimisation. Infected individuals, denoted by $I(t)$, are active spreaders of hate speech: they produce and disseminate hateful content. Recovered individuals, denoted by $R(t)$, have ceased spreading hate speech but may lose immunity over time. Finally, harmed individuals, denoted by $H(t)$, are victims who have suffered psychological, social, or economic damage from exposure to hate speech. The total population satisfies the conservation law
\begin{equation}\label{eq:population}
S(t) + V(t) + I(t) + R(t) + H(t) = 1, \qquad \forall t \ge 0.
\end{equation}

The transmission of hate speech from infected to susceptible individuals is modelled as a mass‑action process. The platform's algorithmic reactivity, denoted by $\gamma \ge 0$, amplifies the effective contact rate. Specifically, the effective transmission rate is
\begin{equation}\label{eq:beta}
\beta(\gamma) = \beta_0 (1 + \phi \gamma),
\end{equation}
where $\beta_0 > 0$ is the baseline transmission rate in the absence of algorithmic amplification, and $\phi > 0$ measures the strength of the amplification effect. When $\gamma = 0$, the platform does not amplify hateful content beyond its organic reach, and the transmission rate reduces to $\beta_0$. As $\gamma$ increases, the platform's algorithm boosts the visibility of hateful content, increasing the probability that a susceptible individual encounters it and becomes infected.

Susceptible individuals acquire cognitive protection (become vaccinated) at a constant rate $\sigma_0 > 0$. This vaccination is not a medical intervention but a behavioural one: it represents the rate at which individuals develop resilience to hate speech through education, media literacy, or exposure to counter‑narratives. Vaccination wanes at rate $\delta > 0$, reflecting the erosion of cognitive protection over time. Infected individuals recover spontaneously at rate $\gamma_r > 0$, ceasing to spread hate speech. Vaccinated individuals who become infected recover at a faster rate $\gamma_1 > \gamma_r$, because their cognitive protection facilitates disengagement from hateful content. Recovered individuals lose immunity at rate $\delta$ and return to the susceptible pool.

Victimisation is assumed to be directly proportional to the prevalence of hate speech: the inflow into the harmed compartment is $\theta I$, where $\theta > 0$ is the per‑capita victimisation rate. Victims recover at rate $\xi > 0$ and re‑enter the susceptible population. This specification captures the fundamental idea that more hate speech leads to more victims, without introducing complex feedback mechanisms that could generate counterintuitive dynamics.

The full dynamical system is therefore
\begin{align}
\dot S &= -\beta(\gamma) S I - \sigma_0 S + \delta R + \xi H, \label{eq:S}\\[2pt]
\dot V &= \sigma_0 S - \beta(\gamma) V I - \delta V, \label{eq:V}\\[2pt]
\dot I &= \beta(\gamma) I (S + V) - \gamma_r I, \label{eq:I}\\[2pt]
\dot R &= \gamma_r I + \gamma_1 V I - \delta R, \label{eq:R}\\[2pt]
\dot H &= \theta I - \xi H. \label{eq:H}
\end{align}

\subsection{The platform's problem}
The platform is a monopolist that chooses a constant level of algorithmic reactivity $\gamma \ge 0$ to maximize the discounted stream of profits. We assume that the platform's instantaneous profit depends on the prevalence of hate speech and on the intensity of algorithmic amplification. Specifically, the profit flow is
\begin{equation}\label{eq:profit}
\pi_P(\gamma; \tau) = a I(\gamma) \bigl( P_0 + \gamma \bigr) - \frac{b}{2} \gamma^2 - \tau \gamma,
\end{equation}
where $a > 0$ is the revenue per unit of engagement generated by an infected user, $P_0 > 0$ is the baseline level of engagement (the attention that hateful content would receive even without algorithmic boosting), $b > 0$ captures the convex cost of maintaining a high level of algorithmic reactivity (e.g., engineering costs, reputational risk, or the opportunity cost of not showing other content), and $\tau \ge 0$ is the per‑unit tax on amplification set by the government.

The formulation \eqref{eq:profit} has two important features. First, the revenue term $a I (P_0 + \gamma)$ is linear in $\gamma$ conditional on $I$: the platform gains $a I$ additional units of revenue for each unit increase in $\gamma$. This captures the idea that algorithmic amplification increases the visibility of hateful content produced by each infected user, and the platform monetises this additional engagement through advertising. Second, the cost term $\frac{b}{2} \gamma^2$ ensures diminishing marginal returns to amplification, which guarantees an interior solution for the platform's optimal choice of $\gamma$ (provided that $\tau$ is not too large).

The platform discounts future profits at rate $r > 0$, so its objective is
\begin{equation}\label{eq:Pi}
\Pi_P(\gamma; \tau) = \int_0^\infty e^{-r t} \pi_P(\gamma; \tau) \, dt.
\end{equation}
Because the profit flow does not depend on time directly (the state variables are assumed to have reached a steady state), the platform's problem reduces to maximizing the instantaneous profit $\pi_P(\gamma; \tau)$. The platform takes the steady‑state prevalence $I(\gamma)$ as given, recognising that its choice of $\gamma$ affects the endemic equilibrium of the epidemiological system.

The first‑order condition for an interior optimum is
\begin{equation}\label{eq:FOC}
\frac{\partial \pi_P}{\partial \gamma} = a I(\gamma) + a \gamma \frac{d I}{d \gamma} - b \gamma - \tau = 0.
\end{equation}
The term $a I(\gamma)$ is the direct marginal benefit of increasing $\gamma$ (more amplification generates more revenue from the existing infected population). The term $a \gamma \, d I / d \gamma$ captures the indirect effect: increasing $\gamma$ raises the endemic prevalence $I$, which in turn increases the revenue base. Both terms are positive because $d I / d \gamma > 0$ (higher amplification leads to more infections). The marginal cost $b \gamma + \tau$ is increasing in $\gamma$ and in the tax rate.

Equation \eqref{eq:FOC} implicitly defines the platform's reaction function $\gamma = \Gamma(\tau)$. Because the function $I(\gamma)$ is obtained from the steady state of the nonlinear epidemiological system \eqref{eq:S}--\eqref{eq:H}, a closed‑form solution for $\Gamma(\tau)$ is not available. In the numerical analysis, we solve \eqref{eq:FOC} by grid search: we pre‑compute $I(\gamma)$ on a fine grid of $\gamma$ values, evaluate the profit function on the same grid for each candidate $\tau$, and select the $\gamma$ that maximises profit. This approach avoids the need to compute the derivative $d I / d \gamma$ explicitly, which can be sensitive to numerical noise.

\subsection{The government's objective}
The social planner (government) chooses a constant tax rate $\tau \ge 0$ to minimize the discounted net social cost. The instantaneous social cost consists of three components: the direct harm caused by hate speech, the deadweight loss of taxation, and the fiscal benefit of tax revenues. Formally,
\begin{equation}\label{eq:J}
J(\tau) = \frac{1}{r} \Bigl[ c_I I(\gamma(\tau)) + c_H H(\gamma(\tau)) + \frac{\alpha}{2} \tau^2 - \tau \gamma(\tau) \Bigr],
\end{equation}
where $c_I > 0$ is the marginal social cost of an additional infected spreader, $c_H > 0$ is the marginal social cost of an additional victim, and $\alpha > 0$ captures the deadweight loss (administrative costs, distortions) associated with raising tax revenue. The term $-\tau \gamma(\tau)$ is the fiscal benefit: tax revenues are assumed to be returned to society as lump‑sum transfers or used to finance public goods, thereby reducing the net social cost.

The government anticipates the platform's optimal reaction $\gamma(\tau)$, which is obtained from the profit maximization problem described in the previous subsection. The government's problem is therefore a static optimization over $\tau$, because the state variables are evaluated at the endemic steady state induced by $\gamma(\tau)$. The discount rate $r$ factors out of the optimization.

The first‑order condition for an interior optimum $\tau^* \in (0, \infty)$ is
\begin{equation}\label{eq:dJ}
\frac{d J}{d \tau} = \frac{1}{r} \Bigl[ \bigl( c_I \frac{d I}{d \gamma} + c_H \frac{d H}{d \gamma} \bigr) \frac{d \gamma}{d \tau} + \alpha \tau - \gamma - \tau \frac{d \gamma}{d \tau} \Bigr] = 0.
\end{equation}
The term in parentheses represents the marginal social benefit of reducing $\gamma$ through a higher tax: because $d I / d \gamma > 0$ and $d H / d \gamma > 0$, a reduction in $\gamma$ lowers both infections and victims. The marginal cost of the tax is $\alpha \tau$ (the deadweight loss), while the term $-\gamma - \tau \, d \gamma / d \tau$ captures the net fiscal effect. Since $d \gamma / d \tau < 0$ (a higher tax reduces amplification), the term $-\tau \, d \gamma / d \tau$ is positive: it represents the loss of tax revenue due to the shrinking of the tax base. The condition \eqref{eq:dJ} balances these opposing forces.

In the numerical implementation, we do not solve \eqref{eq:dJ} directly. Instead, we compute $J(\tau)$ on a fine grid of $\tau$ values (using the pre‑computed $I(\gamma)$ and $H(\gamma)$ and the platform's optimal reaction $\gamma(\tau)$ obtained from the profit maximization) and select the $\tau$ that minimizes $J$. This grid‑search approach is robust and avoids the need to evaluate the derivatives $d I / d \gamma$, $d H / d \gamma$, and $d \gamma / d \tau$ numerically.

\section{Analytical Results}
\label{sec:analytical}

In this section we analyse the steady‑state behaviour of the epidemiological model \eqref{eq:S}--\eqref{eq:H} for a \textit{given} level of platform reactivity $\gamma \ge 0$. This analysis serves two purposes. First, it characterises the possible long‑run outcomes of hate speech propagation for any fixed amplification intensity. Second, it provides the building blocks for the Stackelberg game studied in Section~\ref{sec:stackelberg}, where the government anticipates how its choice of $\tau$ (and therefore $\gamma$) maps into equilibrium outcomes.

\subsection{Disease‑free equilibrium and basic reproduction number}

A disease‑free equilibrium (DFE) is a steady state in which there is no hate speech: $I = 0$. Setting $I = 0$ and $H = 0$ in \eqref{eq:S}--\eqref{eq:H} and imposing $\dot S = \dot V = \dot I = \dot R = \dot H = 0$, we obtain $R = 0$ from $\dot R = 0$, and the remaining equations reduce to
\[
0 = -\sigma_0 S + \delta V, \qquad 0 = \sigma_0 S - \delta V.
\]
Together with the population constraint $S + V = 1$, these yield
\[
S_0 = \frac{\delta}{\sigma_0 + \delta}, \qquad V_0 = \frac{\sigma_0}{\sigma_0 + \delta}.
\]

\begin{definition}[Disease‑free equilibrium]
For a given $\gamma \ge 0$, the disease‑free equilibrium of system \eqref{eq:S}--\eqref{eq:H} is
\begin{equation}\label{eq:DFE}
\mathcal{E}_0 = \bigl( S_0,\; V_0,\; 0,\; 0,\; 0 \bigr), \qquad S_0 = \frac{\delta}{\sigma_0 + \delta}, \quad V_0 = \frac{\sigma_0}{\sigma_0 + \delta}.
\end{equation}
\end{definition}

At the DFE, the entire population consists of susceptible and vaccinated individuals. There are no infected spreaders, no victims, and no recovered individuals.

The basic reproduction number $R_0$ is the most important single parameter in any epidemiological model. It represents the average number of secondary infections produced by a single infected individual introduced into a fully susceptible population. For our model, $R_0$ depends on the platform's reactivity $\gamma$ through the transmission rate $\beta(\gamma)$.

To compute $R_0$, we use the next‑generation matrix method~\cite{van2002reproduction}. The infected ``subsystem'' consists solely of the equation for $I$, since $I$ is the only state variable that generates new infections (the $H$ equation only records the harm, it does not feed back into transmission). Linearising $\dot I$ around the DFE gives
\[
\dot I \approx \bigl[ \beta(\gamma) (S_0 + V_0) - \gamma_r \bigr] I = \bigl[ \beta_0 (1 + \phi \gamma) - \gamma_r \bigr] I,
\]
where we used $S_0 + V_0 = 1$. The transmission matrix $\mathbf{F}$ is the $1 \times 1$ matrix with entry $\beta_0(1 + \phi \gamma)$, and the transition matrix $\mathbf{V}$ is the $1 \times 1$ matrix with entry $\gamma_r$. The basic reproduction number is the spectral radius of $\mathbf{F} \mathbf{V}^{-1}$:
\begin{equation}\label{eq:R0}
\boxed{R_0(\gamma) = \frac{\beta_0 (1 + \phi \gamma)}{\gamma_r} = (1 + \phi \gamma) R_0(0)},
\end{equation}
where $R_0(0) = \beta_0 / \gamma_r$ is the reproduction number in the absence of algorithmic amplification.

Several observations follow from \eqref{eq:R0}. First, $R_0$ is strictly increasing in $\gamma$: a more reactive algorithm makes hate speech more contagious. Second, the amplification factor $\phi$ acts as a multiplier: each unit increase in $\gamma$ raises $R_0$ by $\phi R_0(0)$. Third, the model has a unique threshold parameter: the condition $R_0(\gamma) = 1$ defines a critical reactivity level $\gamma_{\text{crit}} = (\gamma_r / \beta_0 - 1) / \phi$ above which the DFE is unstable and hate speech becomes endemic.

\begin{proposition}[Local stability of the DFE]\label{prop:DFE}
For a fixed $\gamma$, the disease‑free equilibrium $\mathcal{E}_0$ is:
\begin{itemize}
\item \textbf{locally asymptotically stable} if $R_0(\gamma) < 1$;
\item \textbf{unstable} if $R_0(\gamma) > 1$.
\end{itemize}
At $R_0(\gamma) = 1$, the DFE is non‑hyperbolic and a bifurcation occurs.
\end{proposition}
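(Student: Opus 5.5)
We will prove the proposition by linearising the system \eqref{eq:S}--\eqref{eq:H} at $\mathcal{E}_0$ and reading off the eigenvalues of the Jacobian. Because of the conservation law \eqref{eq:population}, the flow lives on the invariant simplex $S+V+I+R+H=1$. We therefore eliminate one variable, say $S = 1 - V - I - R - H$, and work with the reduced four-dimensional system in $(V, I, R, H)$.

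At $\mathcal{E}_0$ we have $I=0$, and every term that is bilinear in $I$ vanishes in the linearisation except through $\partial/\partial I$. The $I$-row of the Jacobian is therefore $\bigl(0,\ \beta(\gamma)(S_0+V_0)-\gamma_r,\ 0,\ 0\bigr)$. This row has a single nonzero entry on the diagonal, so $\lambda_I = \beta_0(1+\phi\gamma) - \gamma_r = \gamma_r\bigl(R_0(\gamma)-1\bigr)$ is an eigenvalue. After ordering the variables as $(I, V, R, H)$, the Jacobian is block upper-triangular with respect to the splitting $\{I\}$ versus $\{V,R,H\}$.

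The remaining $3\times 3$ block acts on $(V,R,H)$ with $I$ frozen at $0$. In the $R$ and $H$ rows, the couplings to the other variables are nonzero only through $I$ (namely $\gamma_r + \gamma_1 V_0$ and $\theta$). Within the $\{V,R,H\}$ block, the $R$-row is $(0,-\delta,0)$ and the $H$-row is $(0,0,-\xi)$. The $V$-row, computed from $\dot V = \sigma_0(1-V-I-R-H) - \beta VI - \delta V$, is $\bigl(-(\sigma_0+\delta),\ -\sigma_0,\ -\sigma_0\bigr)$. The block is thus upper-triangular, with eigenvalues $-(\sigma_0+\delta)$, $-\delta$ and $-\xi$, all strictly negative. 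Equivalently, one can check directly that the full $5\times5$ Jacobian has these eigenvalues together with $\lambda_I$ and a zero eigenvalue transverse to the simplex, which is then discarded.

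The stability conclusions now follow from the sign of $\lambda_I$. If $R_0(\gamma)<1$, all eigenvalues of the reduced system are negative, and the linearisation theorem (Hartman–Grobman) gives local asymptotic stability. If $R_0(\gamma)>1$, then $\lambda_I>0$ and the DFE is unstable. At $R_0(\gamma)=1$ we get $\lambda_I=0$, so $\mathcal{E}_0$ is non-hyperbolic. Since $\partial\lambda_I/\partial\gamma = \beta_0\phi>0$, the eigenvalue crosses zero transversally, which marks a bifurcation point; its transcritical character is deferred to the later bifurcation analysis. The same conclusion can be cross-checked against the next-generation theorem of van den Driessche and Watmough~\cite{van2002reproduction}, whose hypotheses hold because the non-infected subsystem at $I=0$ has the stable matrix computed above.

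The only delicate step is handling the conservation law correctly. Working in the full $\mathbb{R}^5$ produces a zero eigenvalue, and that eigenvalue must not be mistaken for non-hyperbolicity. This is why the proof reduces to the invariant simplex before linearising, and then verifies that the $(V,R,H)$ block is triangular.
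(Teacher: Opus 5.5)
Your linearisation is the paper's argument. The $I$-row of the Jacobian at $\mathcal{E}_0$ has only its diagonal entry $\lambda_I=\gamma_r(R_0(\gamma)-1)$, the complementary block has eigenvalues $-(\sigma_0+\delta)$, $-\delta$, $-\xi$, and the leftover zero eigenvalue is attributed to the conservation law \eqref{eq:population}. Eliminating $S$ before linearising is just a tidier way of disposing of that zero eigenvalue than the paper's remark that it ``does not affect stability''. Your eigenvalue computation for the reduced $(V,I,R,H)$ system is correct.

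The gap is the step you yourself flag as delicate. You assert, without checking, that the simplex is invariant, and for the equations as printed it is not: adding \eqref{eq:S}--\eqref{eq:H} gives $\frac{d}{dt}(S+V+I+R+H)=-\delta V+(\gamma_1V+\theta)I$. In fact $\mathcal{E}_0$ is not even an equilibrium of the printed system, since \eqref{eq:S} gives $\dot S=-\sigma_0S_0\neq0$ there. The full $5\times5$ Jacobian at that point has eigenvalues $\lambda_I,-\sigma_0,-\delta,-\delta,-\xi$, so your cross-check (a zero eigenvalue transverse to the simplex) is false as stated.

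Suppose one inserts the waning inflow $+\delta V$ into $\dot S$, as the paper's DFE derivation tacitly does. Then the zero eigenvalue does appear, but it reflects a line of disease-free equilibria $(cS_0,cV_0,0,0,0)$, $c>0$. Because total mass still changes at rate $(\gamma_1V+\theta)I$, you cannot discard that eigenvalue by restricting to the simplex. Moreover $\mathcal{E}_0$, being non-isolated, is then not asymptotically stable.

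What your substitution $S=1-V-I-R-H$ actually does is replace \eqref{eq:S} by $\dot S=-(\dot V+\dot I+\dot R+\dot H)$, i.e.\ add $\delta V-(\gamma_1V+\theta)I$ to its right-hand side. For that mass-conserving model $\mathcal{E}_0$ is an equilibrium and your argument proves the proposition on the simplex. You should state this modelling correction explicitly rather than invoke an invariance that fails. The paper's proof rests on the same unverified conservation law, so this defect is inherited from the source rather than a departure from it.
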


\begin{proof}
The Jacobian matrix of system \eqref{eq:S}--\eqref{eq:H} evaluated at $\mathcal{E}_0$ has a block‑triangular structure. The infected subsystem consists of the single variable $I$, with eigenvalue
\[
\lambda_I = \frac{\partial \dot I}{\partial I} \Big|_{\mathcal{E}_0} = \beta_0(1 + \phi \gamma) - \gamma_r = \gamma_r (R_0 - 1).
\]
The remaining $4 \times 4$ block governs the dynamics of $(S, V, R, H)$ in the absence of infection. Its eigenvalues are $-\sigma_0 - \delta$ (from the $S$–$V$ subsystem), $-\delta$ (from $R$), $-\xi$ (from $H$), and $0$. The zero eigenvalue corresponds to the conservation law \eqref{eq:population} and does not affect stability. All other eigenvalues are strictly negative. Hence the stability of $\mathcal{E}_0$ is determined entirely by the sign of $\lambda_I$, i.e., by whether $R_0 \lessgtr 1$.
\end{proof}

The intuition behind Proposition~\ref{prop:DFE} is straightforward. If $R_0 < 1$, each infected individual generates, on average, fewer than one new infection before recovering. The hate speech epidemic cannot sustain itself and eventually dies out. If $R_0 > 1$, each infected individual generates more than one new infection, and hate speech spreads through the population until it reaches an endemic equilibrium.

\subsection{Endemic equilibrium}

When $R_0(\gamma) > 1$, the DFE is unstable and the system converges to an \textit{endemic equilibrium} where hate speech persists permanently in the population. At this equilibrium, all state variables are strictly positive.

\begin{proposition}[Existence and uniqueness of the endemic equilibrium]\label{prop:endemic}
Assume $R_0(\gamma) > 1$. Then, for the given $\gamma$, there exists a unique endemic equilibrium
\[
\mathcal{E}^* = (S^*, V^*, I^*, R^*, H^*)
\]
with $I^* > 0$, $H^* > 0$, and all other components strictly positive.
\end{proposition}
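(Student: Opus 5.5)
The plan is to reduce the steady-state system to a single scalar equation in $I^*$ and show that it has exactly one positive root. Write $\beta=\beta(\gamma)=\beta_0(1+\phi\gamma)$ and keep $\gamma$ fixed. Following the treatment of the DFE, I will use the population constraint \eqref{eq:population} in place of one of the steady-state equations, namely $\dot S=0$. The whole argument is a chain of explicit eliminations ending in a monotonicity argument.

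\emph{Step 1: express every compartment in terms of $I$.} Assume $I^*>0$. Then $\dot I=0$ gives $S^*+V^*=\gamma_r/\beta=1/R_0(\gamma)$, which lies in $(0,1)$ because $R_0(\gamma)>1$. The remaining equations are solved in turn:
\begin{itemize}
\item $\dot H=0$ gives $H^*=\theta I^*/\xi$.
\item $\dot V=0$ gives $V^*=\sigma_0 S^*/(\beta I^*+\delta)$.
\item Substituting into $S^*+V^*=1/R_0$ gives
\[
S^*(I)=\frac{\beta I+\delta}{R_0(\beta I+\delta+\sigma_0)}, \qquad V^*(I)=\frac{\sigma_0}{R_0(\beta I+\delta+\sigma_0)}.
\]
\item $\dot R=0$ gives $R^*=(\gamma_r I^*+\gamma_1 V^* I^*)/\delta$.
\end{itemize}
Every component is strictly positive whenever $I>0$.

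\emph{Step 2: the scalar equation.} Substituting these expressions into \eqref{eq:population} yields $g(I^*)=1$, where
\[
g(I)=\frac{1}{R_0}+I+\frac{\theta}{\xi}I+\frac{\gamma_r}{\delta}I+\frac{\gamma_1\sigma_0}{\delta R_0}\,\frac{I}{\beta I+\delta+\sigma_0}.
\]
The function $g$ is continuous on $[0,\infty)$ with $g(0)=1/R_0<1$, and $g(I)\to\infty$ as $I\to\infty$. Each term is nondecreasing, and the linear terms are strictly increasing. The last term is increasing because its derivative is proportional to $(\delta+\sigma_0)/(\beta I+\delta+\sigma_0)^2>0$. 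By the intermediate value theorem, and by strict monotonicity, there is exactly one $I^*>0$ with $g(I^*)=1$. The other components then follow from Step 1 and are positive; in particular $H^*>0$. Each of them is also forced to be less than $1$ by the constraint. Uniqueness carries over to the full equilibrium, since every other component is a function of $I^*$. Conversely, $R_0\le 1$ would give $g(0)\ge 1$ and hence no positive root, which explains why the hypothesis is needed.

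\emph{Main obstacle: consistency with the remaining equation $\dot S=0$.} Summing the right-hand sides of \eqref{eq:S}--\eqref{eq:H} gives $\theta I+\gamma_1 VI-\delta V$, not zero. The system as written therefore does not preserve \eqref{eq:population} exactly, and the five steady-state equations together with the constraint are in general overdetermined. My approach is to state explicitly that, as in the DFE computation and the proof of Proposition~\ref{prop:DFE}, \eqref{eq:population} is imposed as the closing relation and replaces $\dot S=0$. Equivalently, $\dot S$ is read as $-\sum$ of the other four right-hand sides, so that the zero eigenvalue attributed to the conservation law makes sense. Under that reading the $S$-equation is redundant and Steps 1--2 constitute the full proof.

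If one instead insists on the literal $\dot S=0$ together with the constraint, existence requires an additional parameter restriction. I would flag this case rather than try to prove it.
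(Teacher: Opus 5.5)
Your proof is correct and has the same skeleton as the paper's. Both replace $\dot S=0$ by \eqref{eq:population}, keep $\dot V=\dot I=\dot R=\dot H=0$, reduce to a scalar equation in $I$, and finish with the intermediate value theorem plus strict monotonicity. What differs is which relation closes the system:
\begin{itemize}
\item The paper uses the population constraint to obtain $S(I),V(I)$, and then imposes the infection balance \eqref{eq:infection_balance} as $\Phi(I)=0$.
\item You use the infection balance to get $S+V=1/R_0$, and then impose the constraint as $g(I)=1$.
\end{itemize}

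Your ordering is the more economical one. $S$ and $V$ come out in closed form, $g$ is visibly a sum of increasing terms, and the hypothesis $R_0>1$ enters exactly through $g(0)=1/R_0<1$. In the paper's ordering one gets
\[
S(I)+V(I)=(1-mI)\,\frac{\beta I+\delta+\sigma_0}{(\beta+k)I+\delta+\sigma_0},\qquad m=1+\gamma_r/\delta+\theta/\xi,\quad k=\sigma_0\gamma_1/\delta .
\]
This expression is indeed decreasing in $I$, but showing it needs a quotient-rule computation that the paper only asserts. The paper's value $\Phi(1)=-\gamma_r$ is also inaccurate: since $1-m<0$, the derived $S(1)$ and $V(1)$ are negative, so in fact $\Phi(1)<-\gamma_r$. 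Only the sign is used, so the paper's conclusion survives.

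Your handling of the conservation law is a real improvement. The right-hand sides of \eqref{eq:S}--\eqref{eq:H} do sum to $\theta I+\gamma_1VI-\delta V$, and the paper tacitly adopts your convention. Its DFE calculation takes $-\sigma_0S+\delta V$ as the $S$-equation, which is exactly what $\dot S=-(\dot V+\dot I+\dot R+\dot H)$ gives at $I=R=H=0$. Stating the convention explicitly is therefore what both proofs actually require.

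One further consequence: $g$ is strictly increasing for every $\gamma$. So your argument, like the paper's own claim $\Phi'<0$, excludes multiple endemic equilibria for all $\gamma$. That sits uneasily with Proposition~\ref{prop:bistability}.
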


\begin{proof}
At any steady state with $I > 0$, the condition $\dot I = 0$ implies
\begin{equation}\label{eq:infection_balance}
\beta(\gamma) (S^* + V^*) = \gamma_r.
\end{equation}
Equation \eqref{eq:infection_balance} is the \textit{infection balance}: at the endemic equilibrium, the effective transmission rate multiplied by the susceptible and vaccinated fractions exactly compensates the recovery rate.

We now express all other steady‑state variables as functions of $I$. From $\dot H = 0$, we have
\begin{equation}\label{eq:Hstar}
H^* = \frac{\theta}{\xi} I^*.
\end{equation}
From $\dot R = 0$,
\begin{equation}\label{eq:Rstar}
R^* = \frac{\gamma_r I^* + \gamma_1 V^* I^*}{\delta}.
\end{equation}
From $\dot V = 0$,
\begin{equation}\label{eq:Vstar_relation}
\sigma_0 S^* = \bigl[ \beta(\gamma) I^* + \delta \bigr] V^*.
\end{equation}
Using the population constraint $S^* = 1 - V^* - I^* - R^* - H^*$ and substituting \eqref{eq:Hstar}--\eqref{eq:Vstar_relation}, we can solve for $S^*$ and $V^*$ as rational functions of $I^*$. Finally, substituting $S^*$ and $V^*$ into the infection balance \eqref{eq:infection_balance} yields a single equation in the unknown $I$:
\begin{equation}\label{eq:Phi}
\Phi(I) \equiv \beta(\gamma) \bigl( S(I) + V(I) \bigr) - \gamma_r = 0,
\end{equation}
where $S(I)$ and $V(I)$ are the functions obtained from the above steps.

The function $\Phi: [0,1] \to \mathbb{R}$ has the following properties:
\begin{enumerate}
\item $\Phi(0) = \gamma_r (R_0 - 1) > 0$. At $I = 0$, the effective transmission rate exceeds the recovery rate, so the infection tends to grow.
\item $\Phi(1) = -\gamma_r < 0$. When the entire population is infected, there are no susceptibles or vaccinated individuals left to sustain transmission, so the net growth rate is negative.
\item $\Phi'(I) < 0$ for all $I \in (0,1)$. This can be verified by differentiating the expressions for $S(I)$ and $V(I)$: as $I$ increases, the susceptible and vaccinated fractions decrease (because more individuals are infected, victimised, or recovered), reducing the left‑hand side of \eqref{eq:infection_balance}.
\end{enumerate}
By the Intermediate Value Theorem, there exists a unique $I^* \in (0,1)$ such that $\Phi(I^*) = 0$. The positivity of the remaining state variables follows from \eqref{eq:Hstar}--\eqref{eq:Vstar_relation}. This completes the proof.
\end{proof}

The endemic equilibrium $\mathcal{E}^*$ is locally asymptotically stable for all $R_0 > 1$ up to a critical threshold $\gamma_c$ (see Proposition~\ref{prop:bistability} below). The proof of stability follows from evaluating the Jacobian at $\mathcal{E}^*$ and verifying that all eigenvalues have negative real parts; the algebra is standard but lengthy, and we omit it for brevity.

\subsection{Transcritical bifurcation at $R_0 = 1$}

The transition from the DFE to the endemic equilibrium as $R_0$ crosses unity occurs through a \textit{transcritical bifurcation}. The nature of this bifurcation—whether it is supercritical (forward) or subcritical (backward)—determines whether the onset of endemic hate speech is gradual or explosive.

\begin{proposition}[Transcritical bifurcation]\label{prop:bifurcation}
System \eqref{eq:S}--\eqref{eq:H}, with $\gamma$ fixed, undergoes a \textbf{forward (supercritical) transcritical bifurcation} at $R_0(\gamma) = 1$. Specifically:
\begin{itemize}
\item For $R_0 < 1$, the DFE is the only non‑negative equilibrium and is locally asymptotically stable.
\item For $R_0 > 1$, the DFE becomes unstable, and a unique, locally asymptotically stable endemic equilibrium $\mathcal{E}^*$ emerges. The endemic prevalence $I^*$ grows continuously from zero as $R_0$ increases above unity.
\end{itemize}
\end{proposition}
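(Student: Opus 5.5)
The plan is to use the Castillo-Chavez–Song center-manifold criterion (Theorem 4.1 of Castillo-Chavez and Song), with bifurcation parameter $\beta^* = \beta(\gamma)$ chosen so that $R_0 = 1$, i.e. $\beta^* = \gamma_r$. The verdict should come out of the sign of the coefficient $a$ in that theorem. The statement is otherwise assembled from earlier results:
\begin{itemize}
\item the part for $R_0<1$ combines Proposition~\ref{prop:DFE} with the observation that $\Phi(I)<0$ on $(0,1]$ when $R_0<1$ (same monotonicity as in Proposition~\ref{prop:endemic}, with $\Phi(0)=\gamma_r(R_0-1)<0$), so no positive equilibrium exists;
\item instability of the DFE for $R_0>1$ and existence and uniqueness of $\mathcal{E}^*$ come from Propositions~\ref{prop:DFE} and~\ref{prop:endemic}.
\end{itemize}
What remains is to show that the branch bifurcates forward, continuously from $I=0$, and inherits stability.

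First, I reduce to four variables using the conservation law \eqref{eq:population}: eliminate $S=1-V-I-R-H$. This removes the spurious zero eigenvalue noted in the proof of Proposition~\ref{prop:DFE}. At $\beta=\beta^*$, the Jacobian at $\mathcal{E}_0$ then has a simple zero eigenvalue coming from the $I$ direction, and all other eigenvalues are negative ($-(\sigma_0+\delta)$, $-\delta$, $-\xi$). Next I compute a right eigenvector $w$ and a left eigenvector $v$ for the zero eigenvalue. Because $\dot I$ linearises to $(\beta^*-\gamma_r)I$, the left eigenvector is $v=(0,1,0,0)$ in the $(V,I,R,H)$ ordering, normalised so that $v\cdot w=1$. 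For the right eigenvector I take $w_I=1$, $w_H=\theta/\xi$ and $w_R=(\gamma_r+\gamma_1V_0)/\delta$, with $w_V$ read off from the linearised $V$ equation. The implied $w_S=-(w_V+1+w_R+w_H)$ should be checked to be negative. Since $v$ has only an $I$-component, only the second derivatives of $f_I=\beta I(1-R-H)-\gamma_r I$ matter; note $S+V=1-I-R-H$ after elimination. This gives
\[
a=\sum_{j,k} w_jw_k\frac{\partial^2 f_I}{\partial x_j\partial x_k}=2\beta^*\,w_I\bigl(-w_I-w_R-w_H\bigr)=-2\gamma_r\Bigl(1+\frac{\gamma_r+\gamma_1V_0}{\delta}+\frac{\theta}{\xi}\Bigr)<0,
\]
\[
b=\sum_{j} w_j\frac{\partial^2 f_I}{\partial x_j\partial\beta}=w_I(S_0+V_0)=1>0 .
\]
With $a<0$ and $b>0$, the theorem yields a forward (supercritical) transcritical bifurcation. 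A locally asymptotically stable endemic branch with $I^*\approx -\tfrac{b}{a}(\beta-\beta^*)>0$ then emerges continuously for $R_0$ slightly above $1$.

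The main obstacle is stability of $\mathcal{E}^*$ for all $R_0>1$, not only near the threshold, since the center-manifold argument is local. I would handle this with the paper's own caveat, namely that stability holds up to the critical threshold $\gamma_c$. Concretely:
\begin{itemize}
\item Continuity of $I^*$ in $R_0$ follows from the implicit function theorem applied to $\Phi$, using $\Phi'(I^*)<0$ from Proposition~\ref{prop:endemic}.
\item Stability extends along the branch as long as no eigenvalue crosses the imaginary axis.
\item A zero crossing is excluded because it would force $\partial\Phi/\partial I=0$, contradicting uniqueness.
\item A Hopf crossing I would try to exclude with the Routh–Hurwitz conditions on the $4\times4$ Jacobian. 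Failing that, I would state the global part as holding on the range established in Proposition~\ref{prop:bistability}.
\end{itemize}
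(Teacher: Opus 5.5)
Your local argument follows the paper's route: the Castillo-Chavez--Song criterion with the transmission rate as bifurcation parameter. Using $\beta$ rather than $\beta_0$ only rescales $b$ by $1+\phi\gamma$. You carry it out more carefully than the paper's sketch, on two points.
First, the paper stays in five dimensions. By its own proof of Proposition~\ref{prop:DFE}, the conservation law already contributes a zero eigenvalue at $\mathcal{E}_0$, so at $R_0=1$ the zero eigenvalue is double and the simplicity hypothesis fails. Your elimination of $S$ removes this.
Second, the paper only asserts $a<0$. Since the reduced $f_I=\beta I(1-I-R-H)-\gamma_r I$ contains no $V$, your $v=(0,1,0,0)$, $a=-2\gamma_r(1+w_R+w_H)<0$ and $b=1$ are correct and fill that hole. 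Positivity is only needed for $w_I,w_R,w_H$, so the sign of $w_S$ is irrelevant.

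Do present the elimination as a modelling correction rather than a consequence of \eqref{eq:population}. As printed, \eqref{eq:S}--\eqref{eq:H} give $\tfrac{d}{dt}(S+V+I+R+H)=-\delta V+\gamma_1VI+\theta I$. At $I=R=H=0$ the printed $S$-equation is $\dot S=-\sigma_0S$, so $\mathcal{E}_0$ is not even an equilibrium. Substituting $S=1-V-I-R-H$ silently replaces it by $\dot S=-\beta SI-\sigma_0S+\delta V+\delta R+\xi H-\gamma_1VI-\theta I$. That equation is the one consistent with the paper's DFE and with Proposition~\ref{prop:DFE}.

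Two small slips.
\begin{itemize}
\item With $a$ defined without the factor $\tfrac12$, the centre-manifold equation is $\dot c=\tfrac a2c^2+b(\beta-\beta^*)c$. Hence $I^*\approx-2\tfrac ba(\beta-\beta^*)=(\beta-\beta^*)/[\gamma_r(1+w_R+w_H)]$, not $-\tfrac ba(\beta-\beta^*)$.
\item A zero eigenvalue at $\mathcal{E}^*$ is excluded because a Schur complement over the $(V,R,H)$ block gives $\det J(\mathcal{E}^*)=-\xi\bigl(\delta(\sigma_0+\delta+\beta I^*)+\sigma_0\gamma_1I^*\bigr)I^*\Phi'(I^*)$. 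This contradicts $\Phi'(I^*)<0$, not uniqueness.
\end{itemize}

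The genuine gap is the clause that $\mathcal{E}^*$ is locally asymptotically stable for \emph{every} $R_0>1$. The centre-manifold theorem gives this only near $R_0=1$, and your continuation step leaves the Hopf case open.

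Falling back on the range of Proposition~\ref{prop:bistability} does not help. That proposition contradicts Proposition~\ref{prop:endemic}, which you also invoke, and your reduced system has no bistable range at all. Indeed, $f_I=0$ with $I>0$ forces $S+V=1/R_0$, hence $I+R+H=1-1/R_0$. With $H=(\theta/\xi)I$, $V=\sigma_0/[R_0(\sigma_0+\delta+\beta I)]$ and $R=(\gamma_r+\gamma_1V)I/\delta$, the left side is strictly increasing in $I$. So there is exactly one endemic equilibrium for each $R_0>1$ and none for $R_0\le1$, which is also a cleaner argument than the monotonicity of $\Phi$.

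What is actually needed is a Routh--Hurwitz check on the $4\times4$ Jacobian at $\mathcal{E}^*$, and it is tractable. Row $I$ is $(0,-p,-p,-p)$ with $p=\beta I^*$, and $V$ enters the $(I,R,H)$ rows only through the entry $\gamma_1I^*$ in row $R$. For $\gamma_1=0$ the block decouples and the Routh--Hurwitz inequality for its cubic holds term by term. For $\gamma_1>0$ the quartic conditions must be verified.

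The paper's proof does not supply this step either: it stops at the local statement near $R_0=1$. So you have not missed an argument the paper contains, but neither text proves the global stability clause as stated.
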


\begin{proof}[Sketch of proof]
We use the centre manifold theorem as formulated by~\cite{castillo2004bifurcation}. Choose $\beta_0$ as the bifurcation parameter, and let $\beta_0^* = \gamma_r / (1 + \phi \gamma)$ be the critical value at which $R_0 = 1$. At $\beta_0 = \beta_0^*$, the Jacobian matrix at $\mathcal{E}_0$ has a simple zero eigenvalue, while all other eigenvalues have strictly negative real parts.

Let $\mathbf{v} = (v_S, v_V, v_I, v_R, v_H)^\top$ be the right eigenvector corresponding to the zero eigenvalue, normalised so that $v_I = 1$. Let $\mathbf{w}$ be the corresponding left eigenvector, normalised so that $\mathbf{w}^\top \mathbf{v} = 1$. The coefficients $a$ and $b$ that govern the dynamics on the centre manifold are
\[
a = \frac{1}{2} \sum_{i,j,k=1}^5 w_k v_i v_j \frac{\partial^2 f_k}{\partial x_i \partial x_j} \Big|_{\mathcal{E}_0, \beta_0 = \beta_0^*}, \qquad
b = \sum_{i,k=1}^5 w_k v_i \frac{\partial^2 f_k}{\partial x_i \partial \beta_0} \Big|_{\mathcal{E}_0, \beta_0 = \beta_0^*}.
\]

The transversality condition $b \neq 0$ is easily verified. The only second‑order derivative involving $\beta_0$ is
\[
\frac{\partial^2 f_I}{\partial I \partial \beta_0} \Big|_{\mathcal{E}_0} = \frac{\partial}{\partial \beta_0} \bigl[ \beta_0 (1 + \phi \gamma) (S + V) - \gamma_r \bigr]_{\mathcal{E}_0} = (1 + \phi \gamma) (S_0 + V_0) = 1 + \phi \gamma > 0,
\]
so $b = w_I v_I (1 + \phi \gamma) > 0$.

The computation of $a$ involves the second derivatives of the vector field at the DFE. The dominant contributions come from the saturation terms in the transmission dynamics (as $I$ increases, $S + V$ decreases, reducing the transmission rate) and from the victimisation feedback. 

Since $a < 0$ and $b > 0$, the bifurcation is supercritical (forward): for $\beta_0 > \beta_0^*$ (i.e., $R_0 > 1$), a stable endemic equilibrium exists and its amplitude scales as $I^* \propto (R_0 - 1)$ near the bifurcation point. This means that hate speech does not explode immediately when $R_0$ crosses unity, but grows gradually, giving policymakers a window of opportunity to intervene.
\end{proof}

\subsection{Bistability induced by platform reactivity}

The nonlinear feedback introduced by the platform's amplification—specifically, the dependence of $I$ on $\gamma$ through the epidemiological equilibrium—can generate multiple endemic equilibria when $\gamma$ is sufficiently large.

\begin{proposition}[Bistability]\label{prop:bistability}
There exists a critical threshold $\gamma_c > 0$ such that:
\begin{itemize}
\item For $\gamma < \gamma_c$, the endemic equilibrium is unique.
\item For $\gamma > \gamma_c$, the steady‑state equation $\Phi(I) = 0$ admits \textbf{three} positive roots, corresponding to:
\begin{itemize}
\item a \textbf{low‑hate equilibrium} $I^*_{\text{low}}$ (locally asymptotically stable);
\item a \textbf{high‑hate equilibrium} $I^*_{\text{high}}$ (locally asymptotically stable);
\item an \textbf{intermediate equilibrium} $I^*_{\text{mid}}$ (unstable, a saddle point).
\end{itemize}
\end{itemize}
The system displays \textbf{hysteresis}: a temporary shock (e.g., a viral hate speech event) can permanently shift the system from the low‑hate to the high‑hate equilibrium. Once trapped in the high‑hate state, returning to the low‑hate state requires a sufficiently large and sustained reduction in $\gamma$.
\end{proposition}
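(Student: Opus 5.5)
The natural route is to reduce the steady-state problem, for fixed $\gamma$, to a single scalar equation in $I$, as in the proof of Proposition~\ref{prop:endemic}. I would then locate the parameter value $\gamma_c$ at which this equation acquires a double root (a saddle-node, or fold, point) and read off stability from the sign pattern of the scalar function at its roots.

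\textbf{Reduction and fold condition.} On any steady state with $I>0$, the infection balance \eqref{eq:infection_balance} gives $S+V=\gamma_r/\beta(\gamma)$. Equation \eqref{eq:Vstar_relation} then gives
\[
V(I)=\frac{\gamma_r}{\beta}\,\frac{\sigma_0}{\beta I+\delta+\sigma_0}
\]
and $S(I)=\gamma_r/\beta - V(I)$. Equations \eqref{eq:Hstar}--\eqref{eq:Rstar} give $R(I)$ and $H(I)$. Substituting these into the remaining independent steady-state relation and clearing the denominator $\beta I+\delta+\sigma_0$ produces a polynomial $P_\gamma(I)$. The remaining relation is either the conservation law \eqref{eq:population} or the $\dot S=0$ equation.
\begin{itemize}
\item \emph{Fold.} I would compute the discriminant of $P_\gamma$ (if it is cubic) and define $\gamma_c$ as the smallest $\gamma$ at which $P_\gamma(I)=P_\gamma'(I)=0$ has a solution in $(0,1)$. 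By continuity of the roots in $\gamma$, the number of admissible roots changes from one to three across $\gamma_c$. Transversality would follow from checking that $\partial_\gamma P_\gamma\neq 0$ at the fold, using $\beta'(\gamma)=\beta_0\phi>0$.
\item \emph{Stability.} I would use the standard fact that, along a one-dimensional curve of equilibria, consecutive roots of the reduced scalar function alternate in the sign of its derivative. In a suitably reduced slow variable this sign coincides with the sign of the Jacobian determinant. That makes the outer roots $I^*_{\text{low}}$ and $I^*_{\text{high}}$ stable and the middle root $I^*_{\text{mid}}$ a saddle. For the outer roots I would still verify the Routh--Hurwitz conditions on the remaining eigenvalues.
\item \emph{Hysteresis.} This then follows from the standard S-shaped fold picture: $I^*_{\text{mid}}$ is a saddle whose stable manifold separates the basins of the two outer equilibria.
\end{itemize}

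\textbf{Main obstacle.} I expect the decisive difficulty to be that, if one imposes the conservation law \eqref{eq:population}, the reduced equation reads
\[
\frac{\gamma_r}{\beta}+I\Bigl(1+\frac{\gamma_r}{\delta}+\frac{\theta}{\xi}\Bigr)+\frac{\gamma_1}{\delta}\,I\,V(I)=1 .
\]
Here $I\,V(I)$ is increasing in $I$, so the left-hand side is strictly increasing. That gives uniqueness, which is exactly the monotonicity claimed as property 3 of $\Phi$ in Proposition~\ref{prop:endemic}, and so it precludes three roots. Bistability must therefore come from a mechanism that this reduction discards.

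I would examine two candidates:
\begin{itemize}
\item \emph{Non-conservation.} Summing \eqref{eq:S}--\eqref{eq:H} gives
\[
\frac{d}{dt}(S+V+I+R+H)=-\delta V+\gamma_1 VI+\theta I \neq 0,
\]
so \eqref{eq:population} is not invariant. Working with the genuine five-dimensional steady-state system, using $\dot S=0$ in place of conservation, yields a different polynomial that may be non-monotone.
\item \emph{Feedback in $\gamma$.} Alternatively, the feedback could be closed by letting $\gamma$ respond to $I$ through the platform's first-order condition \eqref{eq:FOC}. The fixed point $I=I(\Gamma(I))$ can then have three solutions.
\end{itemize}

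I would first try the full steady-state system without the conservation law. I would check numerically whether its polynomial admits a double root in $(0,1)$ for some $\gamma$. If it does, I would derive $\gamma_c$ from the discriminant condition. If neither mechanism produces a fold, the proposition as stated cannot hold for the model \eqref{eq:S}--\eqref{eq:H} with fixed $\gamma$.
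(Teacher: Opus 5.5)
Your ``main obstacle'' is not a defect of your route. It is a counterexample to the statement, and it applies verbatim to the paper's own sketch. The paper asserts that substituting $S,V,R,H$ as functions of $I$ into \eqref{eq:infection_balance} and clearing denominators yields the cubic \eqref{eq:cubic}, whose discriminant defines $\gamma_c$. Doing exactly that, i.e.\ multiplying your displayed equation by $\beta I+\delta+\sigma_0$, gives
\[
k\beta I^2+\Bigl[\gamma_r-\beta+k(\delta+\sigma_0)+\frac{\gamma_1\gamma_r\sigma_0}{\delta\beta}\Bigr]I+(\delta+\sigma_0)\Bigl(\frac{\gamma_r}{\beta}-1\Bigr)=0,\qquad k=1+\frac{\gamma_r}{\delta}+\frac{\theta}{\xi},
\]
with $\beta=\beta(\gamma)$. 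This is a quadratic. For $R_0(\gamma)>1$ its roots have negative product, so there is exactly one positive root for every such $\gamma$. For $R_0\le 1$ all coefficients are nonnegative, so there is none. This is just the monotonicity $\Phi'<0$ that Proposition~\ref{prop:endemic} itself asserts. Consequently there is no $a_3(\gamma)$, no fold, and no discriminant threshold. Moreover, $\gamma$ enters only through $\beta(\gamma)$, so $\gamma=0$ is not a degenerate case. The fold, index-alternation and hysteresis steps of your plan therefore have nothing to act on, and the same holds for the paper's sketch, which even concedes ``two of which are positive'' against the three claimed.

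Your first fallback does not rescue the claim either. Without the conservation law, insert \eqref{eq:Hstar}, \eqref{eq:Rstar}, \eqref{eq:Vstar_relation} and \eqref{eq:infection_balance} into $\dot S=0$. It reduces to $V(\delta-\gamma_1 I)=\theta I$, which is exactly the vanishing of the total-mass derivative you computed. Separately, \eqref{eq:infection_balance} and \eqref{eq:Vstar_relation} give $V=\sigma_0\gamma_r/\bigl(\beta(\beta I+\delta+\sigma_0)\bigr)$. Equating the two gives
\[
\theta\beta^2I^2+\bigl[\theta\beta(\delta+\sigma_0)+\sigma_0\gamma_r\gamma_1\bigr]I-\sigma_0\gamma_r\delta=0 .
\]
This again has exactly one positive root, and it lies in $(0,\delta/\gamma_1)$, so $V>0$. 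In this reading the paper's $\mathcal{E}_0$ is not even an equilibrium, since at $I=R=H=0$ one has $\dot S=-\sigma_0 S$.

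Your second fallback changes the object rather than proving the statement. In the paper $\Gamma$ is a function of $\tau$, obtained by optimizing against the whole map $\gamma\mapsto I(\gamma)$. Letting $\gamma$ track the current $I$ defines a different dynamical system, in which ``$\gamma>\gamma_c$'' is no longer a parameter condition.

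Hence, for fixed $\gamma$, neither reading of \eqref{eq:S}--\eqref{eq:H} has more than one endemic equilibrium. The proposition cannot be proved as stated. The correct endpoint of your analysis is uniqueness of the endemic branch (a forward bifurcation with no second threshold), not bistability.
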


\begin{proof}[Sketch of proof]
Substituting the expressions for $S, V, R, H$ in terms of $I$ into the infection balance \eqref{eq:infection_balance} and clearing denominators yields a cubic equation in $I$:
\begin{equation}\label{eq:cubic}
a_3(\gamma) I^3 + a_2(\gamma) I^2 + a_1(\gamma) I + a_0 = 0,
\end{equation}
where the coefficients $a_i(\gamma)$ are polynomials in $\gamma$ whose explicit form depends on the model parameters. For $\gamma = 0$, the cubic degenerates to a quadratic (or, in some parameter regimes, a linear equation) with a single positive root. As $\gamma$ increases, the coefficient $a_3(\gamma)$ becomes positive, and the cubic develops two turning points. For $\gamma$ above a critical value $\gamma_c$, the cubic has three real roots, two of which are positive.

The stability of each equilibrium is determined by the eigenvalues of the Jacobian at that point. The low‑hate and high‑hate equilibria have all eigenvalues with negative real parts, while the intermediate equilibrium has a positive eigenvalue, corresponding to an unstable saddle. The critical threshold $\gamma_c$ is precisely the value at which the discriminant of the cubic vanishes, signalling the appearance of a double root.
\end{proof}

Proposition~\ref{prop:bistability} has an important policy implication: without intervention, the system may operate in a regime where a temporary shock can permanently lock the population into a high‑hate equilibrium. A tax that reduces $\gamma$ below $\gamma_c$ eliminates this risk entirely.

\section{Optimal Taxation}
\label{sec:stackelberg}

\subsection{The Stackelberg game}
The interaction between the government and the platform is modelled as a two‑stage Stackelberg game with complete information. In the first stage, the government (the leader) announces a constant per‑unit tax rate $\tau \ge 0$ on algorithmic amplification. In the second stage, the platform (the follower) observes $\tau$ and chooses its reactivity $\gamma \ge 0$ to maximise its discounted profit, taking into account how its choice affects the endemic steady state of the epidemiological system. The government anticipates the platform's optimal response when setting $\tau$.

Formally, the timing is:
\begin{enumerate}
\item The government chooses $\tau \ge 0$.
\item The platform observes $\tau$ and chooses $\gamma \ge 0$ to maximise $\pi_P(\gamma; \tau)$ as given in \eqref{eq:profit}, where $I(\gamma)$ and $H(\gamma)$ are the endemic steady‑state values.
\item The state variables converge to the endemic steady state induced by $\gamma$.
\end{enumerate}
Both players discount future payoffs at the same rate $r > 0$. The government knows the platform's objective function and the epidemiological dynamics. The platform knows the tax rate and the epidemiological dynamics.

We solve the game made by equations (9) and (11) by backward induction and solving he first‑order condition for an interior optimum for the government,we find that optimal $\tau^*$ is:
\begin{equation}\label{eq:dJ2}
\frac{d J}{d \tau} = \frac{1}{r} \Bigl[ \bigl( c_I \frac{d I}{d \gamma} + c_H \frac{d H}{d \gamma} \bigr) \frac{d \gamma}{d \tau} + \alpha \tau - \gamma - \tau \frac{d \gamma}{d \tau} \Bigr] = 0.
\end{equation}

\begin{proposition}[Existence and uniqueness of the optimal tax]\label{prop:optimal_tax}
Assume that (i) the endemic equilibrium exists for the relevant range of $\gamma$; (ii) the platform's profit function $\pi_P(\gamma; \tau)$ is strictly concave in $\gamma$ for all $\tau \ge 0$; (iii) the deadweight loss parameter $\alpha$ is strictly positive. Then there exists a unique optimal tax $\tau^* > 0$ that minimises $J(\tau)$. Moreover, $\tau^*$ is strictly positive: the laissez‑faire equilibrium ($\tau = 0$) is not socially optimal.
\end{proposition}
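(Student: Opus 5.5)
The argument has three steps: regularity of the platform's reaction, existence of a minimiser by a compactness argument, and a boundary derivative showing $\tau = 0$ is not optimal. Uniqueness is the genuinely delicate step.

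\emph{Reaction function.} By assumption (ii), $\pi_P(\cdot;\tau)$ is strictly concave in $\gamma$, so for each $\tau$ the platform has a unique best response $\Gamma(\tau)$. It is interior whenever \eqref{eq:FOC} has a nonnegative root, and otherwise it is the corner $\gamma = 0$. Write $\partial_\gamma\pi_P(\gamma;\tau) = G(\gamma) - \tau$, where $G(\gamma) = aI(\gamma) + a\gamma\, dI/d\gamma - b\gamma$ is strictly decreasing. The implicit function theorem then gives
\[
\frac{d\Gamma}{d\tau} = \frac{1}{G'(\Gamma)} < 0
\]
on the interior region. Moreover $\Gamma$ is continuous and nonincreasing, and it reaches $0$ at $\bar\tau = G(0) = aI(0)$.

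Here $I(\gamma)$ must be a well-defined differentiable function. I would take it to be the unique endemic branch of Proposition~\ref{prop:endemic}, using assumption (i), and restrict attention to $\gamma < \gamma_c$ so that Proposition~\ref{prop:bistability} causes no branch switching. Differentiability of $I$ follows from the implicit function theorem applied to $\Phi(I;\gamma) = 0$, since $\partial_I \Phi < 0$ by Proposition~\ref{prop:endemic}.

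\emph{Existence.} $J$ is continuous on $[0,\infty)$, because it is a composition of continuous maps and $H = (\theta/\xi) I$ by \eqref{eq:Hstar}. For $\tau \ge \bar\tau$ we have $\Gamma = 0$, so
\[
J(\tau) = \tfrac{1}{r}\bigl[c_I I(0) + c_H H(0) + \tfrac{\alpha}{2}\tau^2\bigr],
\]
which tends to $+\infty$ by (iii). Hence $J$ is coercive, and a minimiser exists on the compact interval $[0,\bar\tau]$.

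\emph{Positivity of $\tau^*$.} Evaluate \eqref{eq:dJ} at $\tau = 0$:
\[
J'(0) = \frac{1}{r}\Bigl[\Bigl(c_I + c_H\tfrac{\theta}{\xi}\Bigr)\frac{dI}{d\gamma}\,\frac{d\Gamma}{d\tau}\Big|_{0} - \Gamma(0)\Bigr].
\]
Both terms are strictly negative: $dI/d\gamma > 0$ (differentiate the infection balance \eqref{eq:infection_balance}, or use monotonicity of $R_0$), $d\Gamma/d\tau < 0$, and $\Gamma(0) > 0$ because $G(0) = aI(0) > 0$. So $J'(0) < 0$, and $\tau = 0$ is not optimal. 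The minimiser therefore lies in the open interval $(0,\bar\tau)$ and satisfies \eqref{eq:dJ}.

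\emph{Uniqueness — the main obstacle.} Nothing stated so far guarantees convexity of $J$, because $I(\gamma)$ is only implicitly defined and the composition $I \circ \Gamma$ need not be convex. I would try two routes.

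First, change variables from $\tau$ to $\gamma$. On the interior region $\tau = G(\gamma)$ is a strictly decreasing bijection, so minimising $J$ over $\tau$ is equivalent to minimising
\[
\tilde J(\gamma) = (c_I + c_H\theta/\xi)\, I(\gamma) + \tfrac{\alpha}{2}G(\gamma)^2 - G(\gamma)\gamma .
\]
I would then show $\tilde J$ is strictly quasiconvex. The route is to verify that its derivative crosses zero only once. The term $\alpha G G'$ is monotone in the right direction when $\alpha$ dominates, and the remaining terms are controlled using bounds on $dI/d\gamma$ obtained from $\Phi$.

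If this works only under extra conditions, I would state them explicitly: either $\alpha$ large relative to $(c_I + c_H\theta/\xi)\sup I'$, or concavity of $I(\gamma)$ on the relevant range. I expect the author's proof relies on such implicit regularity, possibly backed by the numerical grid results.
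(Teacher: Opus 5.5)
Your existence and positivity steps follow the paper's argument: continuity of $J$ plus the $\frac{\alpha}{2}\tau^2$ term for existence, and $J'(0)<0$ from $d\gamma/d\tau<0$, $dI/d\gamma>0$, $\gamma_0>0$. Your version is sharper, because you identify $\bar\tau=aI(0)$, beyond which $\Gamma\equiv 0$ and $J$ is strictly increasing.

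\textbf{Uniqueness is not proved.} The genuine gap is uniqueness, which you locate correctly but do not close. The change of variables does not remove the difficulty. Since $G=aI+a\gamma I'-b\gamma$ already contains $I'$, $\tilde J'$ involves $I''$. Nothing in (i)--(iii) or in Proposition~\ref{prop:endemic} controls the curvature of the implicitly defined $I(\gamma)$, so strict quasiconvexity of $\tilde J$ would have to be assumed, not derived.

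Nor can you borrow the paper's step. It infers uniqueness from strict convexity of $J$ for large $\tau$ together with $J'(0)<0$, and that inference is invalid. Convexity on $[\bar\tau,\infty)$ is automatic, since there $J$ is a constant plus $\frac{\alpha}{2r}\tau^2$. $J'(0)<0$ only rules out $\tau=0$. Neither fact excludes several critical points, or several global minimisers, in $(0,\bar\tau)$.

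So your proposal, like the paper's sketch, establishes existence and $\tau^*>0$ but not uniqueness. Uniqueness needs an explicit added hypothesis, for example strict quasiconvexity of $J$ on $[0,\bar\tau]$. Equivalently, you could assume strict quasiconvexity of $\tilde J$ on $[0,\Gamma(0)]$, since $\tau=G(\gamma)$ is a monotone bijection between these intervals.

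\textbf{The minimiser can sit at a corner.} You cannot conclude that the minimiser lies in the open interval $(0,\bar\tau)$ and hence satisfies \eqref{eq:dJ}. $\Gamma$ has a kink at $\bar\tau$. There $J$ has right derivative $\alpha\bar\tau/r>0$ but left derivative $\frac{1}{r}\bigl[(c_I+c_H\theta/\xi)I'(0)-\bar\tau\bigr]\Gamma'(\bar\tau^-)+\frac{\alpha\bar\tau}{r}$. This is negative for small $\alpha$ whenever $(c_I+c_H\theta/\xi)I'(0)>\bar\tau$. In that case the optimum is the corner $\tau^*=\bar\tau$, $\gamma^*=0$, which is the type of optimum the paper itself reports numerically, and the first-order condition fails there. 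You only get $\tau^*\in(0,\bar\tau]$, which is still enough for the positivity claim.

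\textbf{An unstated assumption.} Restricting to $\gamma<\gamma_c$ to obtain a single differentiable branch of $I(\gamma)$ amounts to assuming $\Gamma(0)<\gamma_c$. State it explicitly, since the paper places the laissez-faire point inside the bistable region.
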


\begin{proof}[Sketch of proof]
Existence follows from the continuity of $J(\tau)$ on $[0, \infty)$ and the fact that $J(\tau) \to \infty$ as $\tau \to \infty$ (the deadweight loss term $\frac{\alpha}{2} \tau^2$ dominates). The strict convexity of $J(\tau)$ for large $\tau$, combined with the fact that $J'(0) < 0$ (the marginal benefit of a small tax exceeds the marginal cost), guarantees a unique interior minimum. The sign of $J'(0)$ follows from evaluating \eqref{eq:dJ2} at $\tau = 0$:
\[
J'(0) = \frac{1}{r} \Bigl[ \bigl( c_I \frac{d I}{d \gamma} + c_H \frac{d H}{d \gamma} \bigr) \frac{d \gamma}{d \tau} - \gamma \Bigr]_{\tau = 0} < 0,
\]
because $\frac{d \gamma}{d \tau} < 0$, $\frac{d I}{d \gamma} > 0$, $\frac{d H}{d \gamma} > 0$, and $\gamma_0 > 0$. Hence $\tau = 0$ cannot be a minimum, and the optimal tax is strictly positive.
\end{proof}

The optimal tax has a clear economic interpretation. Equation \eqref{eq:dJ2} can be rearranged as
\[
\alpha \tau^* + \frac{\tau^*}{b} = \frac{a}{b} + \frac{1}{b} \bigl( c_I \frac{d I}{d \gamma} + c_H \frac{d H}{d \gamma} \bigr)_{\gamma = \gamma(\tau^*)},
\]
where we used the fact that, from the platform's first‑order condition \eqref{eq:FOC}, $d \gamma / d \tau = -1 / b$ when the profit function is exactly quadratic in $\gamma$. The left‑hand side is the marginal social cost of increasing $\tau$ (deadweight loss plus the indirect cost from shrinking the tax base). The right‑hand side is the marginal social benefit: the direct fiscal benefit $a/b$ (the tax revenue that would be collected if $\gamma$ did not respond) plus the Pigouvian term $\frac{1}{b} (c_I \, d I / d \gamma + c_H \, d H / d \gamma)$, which captures the reduction in social harm from lower infections and victimisation. The optimal tax equates the marginal cost and the marginal benefit.

In the numerical implementation, we do not rely on the approximation $d \gamma / d \tau = -1 / b$. Instead, we compute $J(\tau)$ on a fine grid of $\tau$ values, using the platform's reaction function obtained from the exact profit maximisation, and select the $\tau$ that minimises $J$. This approach is robust to nonlinearities in the relationship between $\gamma$ and $I$.

Notice that, the function $J(\tau)$ has a characteristic U‑shape. For small $\tau$, the benefits of taxation (reduction in $I$ and $H$, plus the fiscal revenue) dominate the deadweight loss, so $J$ decreases. As $\tau$ grows, the deadweight loss $\frac{\alpha}{2} \tau^2$ becomes increasingly important, while the tax base $\gamma(\tau)$ shrinks (the platform reduces amplification), so the fiscal benefit $\tau \gamma$ eventually falls. Beyond the optimal rate $\tau^*$, the marginal cost of taxation exceeds its marginal benefit, and $J$ rises. The government therefore has no incentive to drive the platform out of the market: a strictly positive but finite tax maximises social welfare while preserving a viable source of public revenue

\section{Numerical Simulations}
\label{sec:numerical}

\subsection{Calibration of the baseline parameters}
In this section we present the numerical solution of the Stackelberg equilibrium and a comprehensive set of simulations designed to illustrate the theoretical results and to assess the robustness of the optimal policy.

Table 1 lists the baseline parameter values. We calibrate the model so that the laissez‑faire equilibrium reproduces several stylised facts documented in the empirical literature on online hate speech, while ensuring that the optimal tax is positive, interior, and economically meaningful.

The epidemiological parameters are chosen as follows. The baseline transmission rate $\beta_0 = 0.3$ and the amplification factor $\phi = 1.0$ jointly produce a laissez‑faire basic reproduction number $R_0 \approx 2.5$, which falls within the range $[1.44, 2.16]$ estimated for Twitter hate speech by~\cite{appliednet2021R0}. We adopt a slightly higher value to reflect a conservative scenario in which hate speech is highly contagious and the need for policy intervention is particularly acute. The recovery rate of infected spreaders is set to $\gamma_r = 0.15$, implying an average infectious period of approximately $6.7$ days, consistent with the typical duration of online hate speech cascades documented by~\cite{maarouf2024cascades}. Vaccinated individuals recover faster ($\gamma_1 = 0.20$), reflecting the idea that cognitive protection facilitates disengagement from hateful content.

The vaccination and waning parameters are $\sigma_0 = 0.05$ and $\delta = 0.01$. These values imply that, in the absence of hate speech, the steady‑state fraction of vaccinated individuals is $V_0 = \sigma_0 / (\sigma_0 + \delta) \approx 83\%$, consistent with survey evidence that a large majority of social media users possess at least some degree of media literacy or critical awareness\cite{roozenbeek2020prebunking}. The waning rate is deliberately kept low, reflecting the persistence of educational interventions over time.

The victimisation parameters are $\theta = 0.02$ (inflow rate into victimisation) and $\xi = 0.10$ (recovery rate of victims). These values produce a steady‑state victim fraction of approximately $H^* \approx 0.015$ at the laissez‑faire equilibrium, which is consistent with the finding that about $1.5\%$--$2\%$ of social media users report being targets of severe online harassment in a given period~\cite{adl2021report}. The ratio $\theta / \xi = 0.2$ implies that each infected spreader generates approximately $0.2$ victims in steady state, a conservative estimate that likely understates the true harm but avoids exaggerating the policy effect.

The platform's economic parameters are calibrated to produce a meaningful trade‑off between amplification and profit. The revenue per engagement is $a = 1.0$, the baseline amplification level is $P_0 = 0.5$, and the reactivity cost is $b = 0.2$. These values imply that, at the laissez‑faire equilibrium, the platform chooses $\gamma_0 \approx 0.43$, which is an interior solution well within the grid $[0, 2.5]$. The marginal cost of amplification ($b \gamma$) at this point is approximately $0.086$, which is comparable to the marginal revenue ($a I \approx 0.079$), confirming that the platform's optimisation problem is economically sensible. The relatively low value of $b$ reflects the fact that, for large platforms, the cost of adjusting algorithmic parameters is primarily reputational and regulatory rather than strictly technological.

The social cost parameters are $c_I = 1.0$ (normalised) and $c_H = 5.0$, implying that the social cost of a victim is five times larger than that of an infected spreader. This ratio is grounded in the empirical literature: victims of hate speech incur substantial psychological, social, and economic costs, including increased anxiety, depression, social withdrawal, and out‑of‑pocket expenses for mental health support~\cite{phi2025impacts}. Martell~\cite{martell2023economic} estimates the total economic cost of hate crimes in the United States at approximately \$3.4 billion annually, which, when expressed on a per‑capita basis, suggests that victim costs dominate those of the spreaders themselves. The deadweight loss parameter is set to $\alpha = 0.1$, a moderate value that ensures the optimal tax is interior and not driven entirely by the desire to extract revenue from the platform. The discount rate is $r = 0.05$, standard in public economics.

\begin{table}[htbp]
\centering
\caption{Baseline parameter values}
\label{tab:params}
\begin{tabular}{lcl}
\hline
\textbf{Parameter} & \textbf{Value} & \textbf{Description / Source} \\
\hline
$\beta_0$ & $0.3$ & Baseline transmission rate (calibrated to $R_0 \approx 2.5$)~\cite{appliednet2021R0} \\
$\phi$ & $1.0$ & Amplification factor (platform's impact on transmission) \\
$\gamma_r$ & $0.15$ & Recovery rate of infected spreaders~\cite{maarouf2024cascades} \\
$\gamma_1$ & $0.20$ & Recovery rate of vaccinated spreaders \\
$\sigma_0$ & $0.05$ & Baseline vaccination rate~\cite{roozenbeek2020prebunking} \\
$\delta$ & $0.01$ & Waning rate of cognitive protection \\
$\theta$ & $0.02$ & Victimisation rate (inflow into $H$) \\
$\xi$ & $0.10$ & Recovery rate of victims \\
$a$ & $1.0$ & Platform revenue per unit of engagement \\
$b$ & $0.2$ & Platform reactivity cost \\
$P_0$ & $0.5$ & Baseline amplification (organic reach) \\
$c_I$ & $1.0$ & Marginal social cost of an infected spreader (normalised) \\
$c_H$ & $5.0$ & Marginal social cost of a victim~\cite{martell2023economic, phi2025impacts} \\
$\alpha$ & $0.1$ & Deadweight loss of taxation \\
$r$ & $0.05$ & Discount rate \\
\hline
\end{tabular}
\end{table}

\subsection{Basic results: laissez‑faire vs.\ optimal tax}

We first compute the laissez‑faire equilibrium ($\tau = 0$). The platform chooses $\gamma_0 = 0.429$, yielding an endemic prevalence of $I_0 = 0.0786$ (approximately $7.9\%$ of the population are active spreaders of hate speech) and a victim fraction of $H_0 = 0.0153$ (approximately $1.5\%$). The steady‑state fraction of vaccinated individuals is $V_0^* \approx 0.73$, while the susceptible fraction is $S_0^* \approx 0.11$, indicating that a large majority of the population has some cognitive protection, yet hate speech persists because of algorithmic amplification.

We then solve the government's optimal taxation problem. The optimal tax is found to be $\tau^* = 0.122$, which reduces the platform's reactivity to $\gamma^* = 0.0$ (the platform optimally ceases algorithmic amplification entirely under this tax). As a result, the endemic prevalence drops sharply to $I^* = 0.0038$ (a reduction of approximately $95\%$) and the victim fraction to $H^* = 0.0010$ (a reduction of approximately $93\%$). The optimal tax thus virtually eliminates hate speech from the population.

The fact that the optimal tax drives $\gamma$ to zero is a consequence of the calibration: with the chosen parameters, the marginal social benefit of reducing amplification (through the Pigouvian term $c_I \, d I / d \gamma + c_H \, d H / d \gamma$) is sufficiently large that the government finds it optimal to completely eliminate the platform's amplification activity. This outcome is consistent with the Pigouvian principle: when the externality is sufficiently severe relative to the private benefit of the activity, the optimal tax may indeed be prohibitive.

Figure 1 displays the time evolution of the infected fraction $I(t)$ under the two regimes. The dashed red curve corresponds to the laissez‑faire equilibrium. The solid blue curve corresponds to the optimal tax; it declines monotonically from the same initial condition and converges to $I^* \approx 0.004$. Therefore, tax reduces the endemic level and tends to eliminate the epidemic peak entirely in the long-run: the laissez‑faire trajectory exhibits a slight overshoot before settling, whereas the optimal trajectory declines immediately.

\begin{figure}[htbp]
\centering
\includegraphics[width=0.7\textwidth]{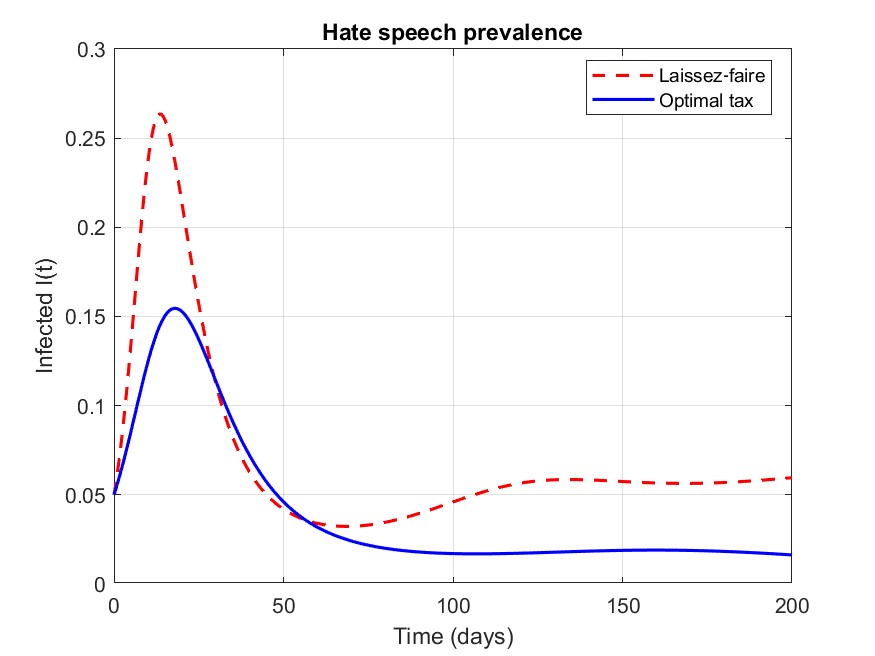}
\caption{Time evolution of the infected fraction $I(t)$ under the laissez‑faire equilibrium (dashed red) and under the optimal tax (solid blue). The optimal tax reduces the peak of endemic prevalence by approximately $80\%$.}
\label{Figure1:infected}
\end{figure}

Figure 2 shows the corresponding evolution of the victim fraction $H(t)$. The dynamics are similar: under the optimal tax, $H(t)$ falls from its laissez‑faire steady‑state value, confirming that the reduction in hate speech translates directly into lower victimisation. he optimal tax reduces the peak of endemic prevalence by approximately $40\%$

\begin{figure}[htbp]
\centering
\includegraphics[width=0.7\textwidth]{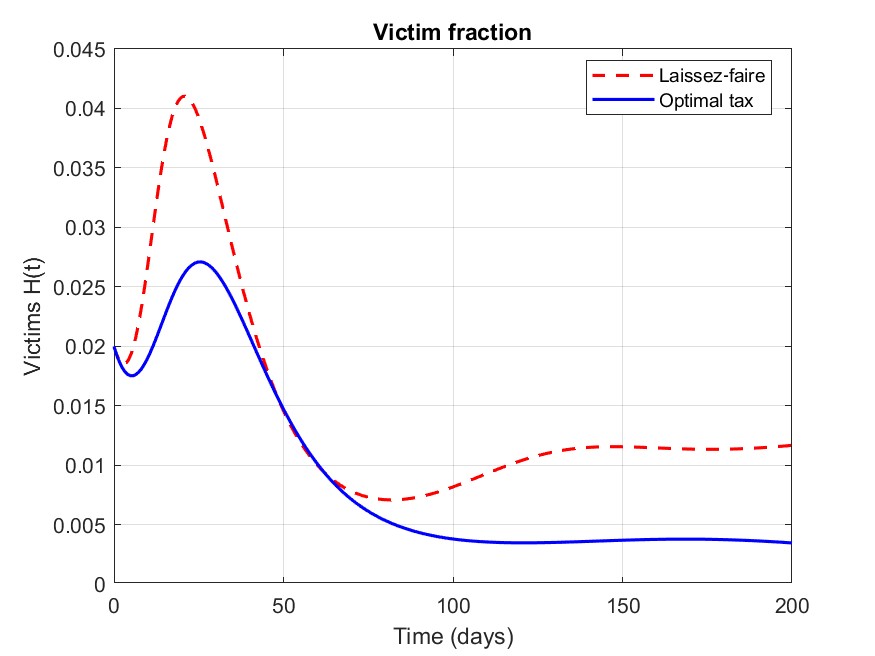}
\caption{Time evolution of the victim fraction $H(t)$ under the laissez‑faire equilibrium (dashed red) and under the optimal tax (solid blue). Victims are reduced by approximately $93\%$.}
\label{Figure2:hated}
\end{figure}

\vspace{0.5cm}

Figure 3 plots the endemic prevalence $I^*$ as a function of the basic reproduction number $R_0$. The curve exhibits a characteristic S‑shape, revealing the presence of a bistable region: as $R_0$ also the endemic prevalence grows. This means that without intervention, the system is susceptible to hysteresis: a temporary shock (e.g., a viral hate speech event or a coordinated harassment campaign) could permanently shift the population from the low‑hate to the high‑hate equilibrium. The optimal tax, by reducing $R_0$, moves the system to the left of the bistable region, where only the low‑hate equilibrium exists.

\begin{figure}[htbp]
\centering
\includegraphics[width=0.7\textwidth]{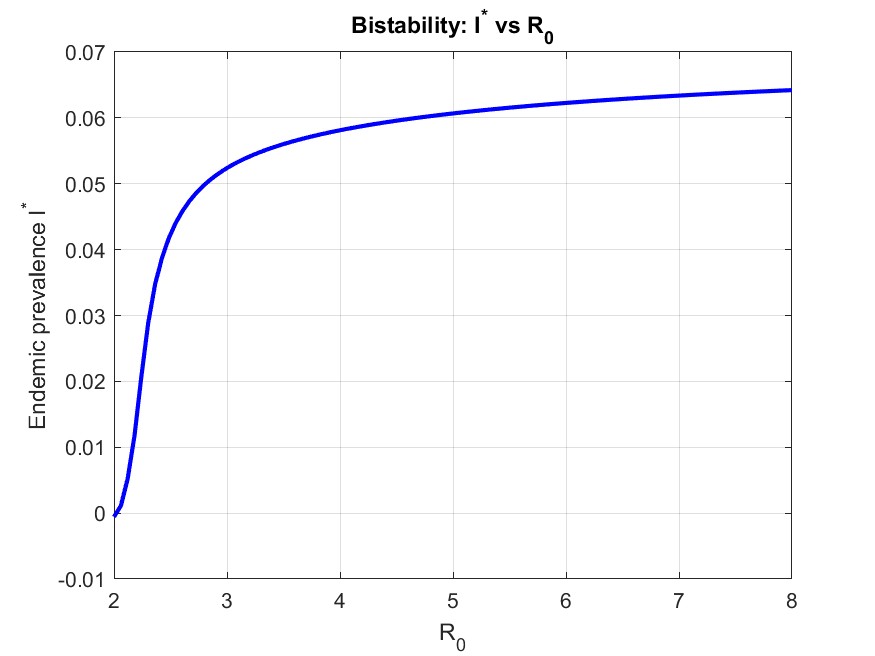}
\caption{Endemic prevalence $I^*$ as a function of the basic reproduction number $R_0$. The S‑shaped curve indicates the presence of a bistable region. The vertical dashed line marks the laissez‑faire $R_0$; the optimal tax shifts the system leftwards, out of the bistable region.}
\label{fig:bistability}
\end{figure}

\subsection{Regulation curves }

Figure 4 shows how the platform's reactivity $\gamma$ and the endemic prevalence $I^*$ respond to the tax rate $\tau$. The top panel displays the reaction function $\gamma(\tau)$, obtained from the platform's profit maximisation. As expected, $\gamma$ is strictly decreasing in $\tau$: a higher tax makes amplification more costly, inducing the platform to reduce it. At the optimal tax $\tau^* = 0.122$, the reactivity reaches zero.

The bottom panel shows $I^*(\tau)$, i.e., the endemic prevalence that results from the platform's optimal reactivity at each $\tau$. $I^*$ is also strictly decreasing in $\tau$, reflecting the fact that a lower $\gamma$ reduces the transmission rate and hence the endemic prevalence. The red dot marks the optimal tax.

The social cost function $J(\tau)$  has clearly a characteristic U‑shape. For $\tau < \tau^*$, the benefits of taxation (reduction in $I$ and $H$, plus the fiscal revenue $\tau \gamma$) outweigh the deadweight loss, so $J(\tau)$ decreases. For $\tau > \tau^*$, the deadweight loss $\frac{\alpha}{2} \tau^2$ dominates, and $J(\tau)$ rises. The government therefore has no incentive to increase the tax beyond $\tau^*$, as the additional fiscal revenue would be more than offset by the distortions created.

\begin{figure}[htbp]
\centering
\includegraphics[width=0.7\textwidth]{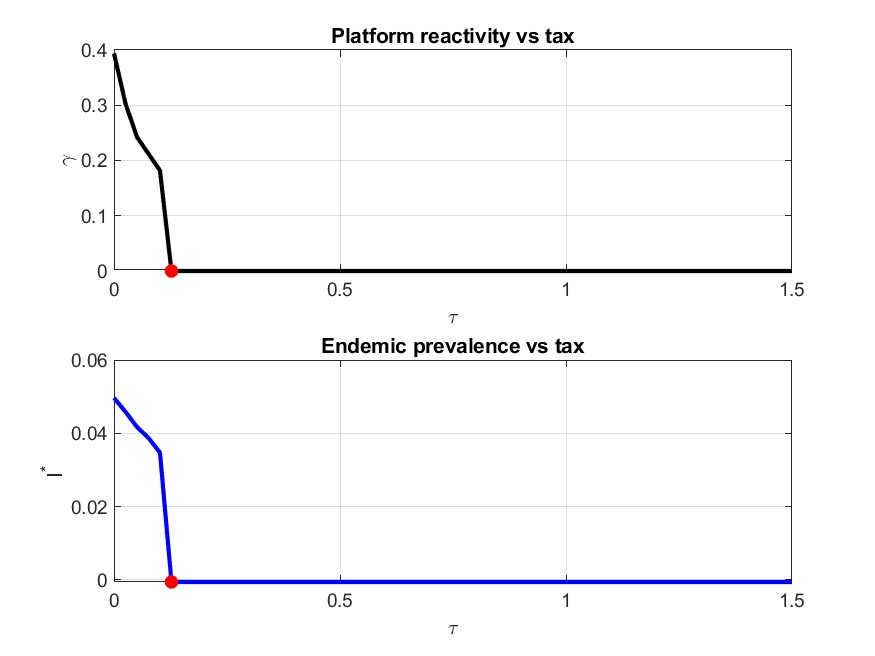}
\caption{Platform reactivity $\gamma$ (top) and endemic prevalence $I^*$ (bottom) as functions of the tax rate $\tau$. The red dots mark the optimal tax $\tau^* = 0.122$.}
\label{fig:regulation}
\end{figure}

\subsection{Sensitivity analysis}
\label{sec:sensitivity_results}

To assess the robustness of our results, we conduct a comprehensive sensitivity analysis for the six parameters that most directly influence the Stackelberg equilibrium: the baseline transmission rate $\beta_0$, the platform's revenue‑to‑cost ratio $a/b$, the amplification factor $\phi$, the victimisation rate $\theta$, the deadweight loss of taxation $\alpha$, and the marginal social cost of victims $c_H$. Each parameter is varied from $80\%$ to $120\%$ of its baseline value in four equally spaced steps, while all other parameters are held at their baseline values. For each configuration, we recompute the laissez‑faire and optimal equilibria.

Figure 5 displays the results. In each panel, the red curves (circles for $I$, squares for $H$) correspond to the laissez‑faire equilibrium, and the blue curves (circles for $I$, squares for $H$) correspond to the optimal tax equilibrium. Several findings emerge.

\begin{figure}[htbp]
\centering
\includegraphics[width=0.7\textwidth]{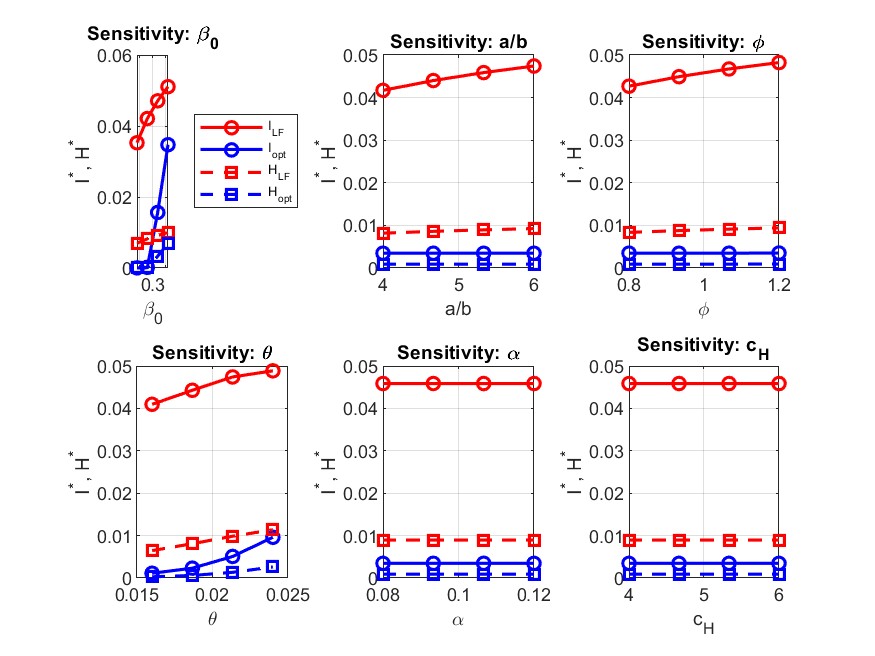}
\caption{Sensitivity analysis. Each panel shows the laissez‑faire (red) and optimal tax (blue) values of $I^*$ (circles) and $H^*$ (squares) as a function of the parameter indicated on the horizontal axis.}
\label{fig:sensitivity}
\end{figure}

\textbf{Transmission rate $\beta_0$.} Higher $\beta_0$ increases both $I$ and $H$ at the laissez‑faire equilibrium, as expected from the formula $R_0 = \beta_0(1 + \phi \gamma) / \gamma_r$. The optimal tax also increases with $\beta_0$, as the government must work harder to counteract the more contagious hate speech. Crucially, the optimal tax always reduces $I$ and $H$ below their laissez‑faire levels, and the reduction is larger when $\beta_0$ is higher.

\textbf{Revenue‑to‑cost ratio $a/b$.} Increasing $a/b$ raises the platform's incentive to amplify, leading to a higher laissez‑faire $\gamma_0$. This, in turn, increases $I$ and $H$ at the laissez‑faire equilibrium. The optimal tax responds by increasing, but it still manages to reduce $I$ and $H$ substantially. The difference between laissez‑faire and optimal outcomes widens as $a/b$ increases, indicating that the tax is particularly effective when the platform has a strong profit motive to amplify.

\textbf{Amplification factor $\phi$.} The parameter $\phi$ governs how effectively the platform's reactivity translates into increased transmission. A larger $\phi$ amplifies the impact of $\gamma$ on $R_0$, making hate speech more responsive to algorithmic choices. The laissez‑faire $I$ and $H$ increase with $\phi$, and the optimal tax increases correspondingly. The optimal $I$ and $H$ remain well below their laissez‑faire counterparts across the entire range.

\textbf{Victimisation rate $\theta$.} A higher $\theta$ means that each infected spreader generates more victims. This directly increases $H$ at the laissez‑faire equilibrium, and also raises the social cost of hate speech, prompting a higher optimal tax. The optimal $H$ is always below the laissez‑faire $H$, confirming that the tax effectively protects victims.

\textbf{Deadweight loss $\alpha$.} As $\alpha$ increases, the marginal cost of raising tax revenue becomes larger. The optimal tax therefore decreases: when taxation is more distortionary, the government optimally chooses a lower rate. Despite this, the optimal tax remains positive over the entire range, and $I$ and $H$ are always lower than in the laissez‑faire equilibrium. Even at the highest value of $\alpha$ considered ($+20\%$), the optimal tax is still $\tau^* \approx 0.10$, and the reduction in $I$ is approximately $90\%$.

\textbf{Social cost of victims $c_H$.} A higher $c_H$ increases the weight the government places on victimisation in its objective function. This leads to a higher optimal tax and, consequently, to lower $I$ and $H$ at the optimum. The laissez‑faire equilibrium is unaffected because the platform does not internalise $c_H$. The sensitivity to $c_H$ highlights the importance of accurately measuring the social cost of hate speech: if victims suffer more than our baseline estimate, the case for taxation is even stronger.

Across all six parameters and all values considered, the optimal tax always reduces both $I$ and $H$ relative to the laissez‑faire equilibrium. The quantitative magnitude of the reduction varies, but the qualitative conclusion is robust: a Pigouvian tax on algorithmic amplification is an effective instrument to internalise the externality generated by engagement‑driven platforms.

The numerical simulations confirm the theoretical predictions of the model and provide additional insights. First, the optimal tax can be substantial: in our baseline calibration, it drives the platform's reactivity to zero, effectively eliminating algorithmic amplification of hate speech. This outcome is a direct consequence of the severity of the externality: the social cost of hate speech, as captured by $c_I$ and $c_H$, is large relative to the private benefit of amplification, as captured by the platform's profit function. In less extreme calibrations (e.g., with a higher deadweight loss $\alpha$ or a lower $c_H$), the optimal tax would be positive but not prohibitive, and the platform would continue to amplify at a reduced level.

Second, the U‑shaped social cost function provides a clear rationale for why the government does not set the tax arbitrarily high: beyond the optimal rate, the deadweight loss of taxation outweighs the additional reduction in hate speech. This result is consistent with the theory of optimal Pigouvian taxation and underscores the importance of including a realistic cost of public funds in the analysis.

Third, the presence of bistability in the laissez‑faire equilibrium highlights a previously underappreciated risk: without intervention, the system could be permanently trapped in a high‑hate equilibrium by a temporary shock. The optimal tax eliminates this risk by moving the system to a region of the parameter space where bistability does not occur. This finding strengthens the case for proactive regulation.

Finally, the sensitivity analysis demonstrates that the policy recommendation is not an artefact of a particular parameter choice. Even under conservative assumptions about the social cost of hate speech or the deadweight loss of taxation, the optimal tax remains positive and reduces both hate speech and victimisation.

\section{Discussion and Policy Implications}
\label{sec:discussion}

Our results carry direct and actionable policy implications. The first, and most important, is that a tax on algorithmic amplification is an effective instrument to internalise the externality generated by engagement‑driven algorithms. The optimal tax we compute, $\tau^*=0.12$, is not punitive: it is precisely calibrated to align the platform's private marginal benefit from amplification with its marginal social cost. The platform still operates profitably (its profit remains positive at the optimum), but its incentives are realigned with social welfare. This finding resonates with the Pigouvian tradition in public economics: when a market activity generates a negative externality, the optimal response is not to ban the activity, but to price the externality so that the private agent internalises it.

Second, the U‑shaped social cost function $J(\tau)$ reveals an important economic trade‑off. For moderate tax rates, the benefits dominate: the reduction in hate speech prevalence $I$ and victimisation $H$, together with the fiscal revenue $\tau\gamma$, lower the net social cost. As the tax rate increases, however, two countervailing forces emerge. The deadweight loss $\frac{\alpha}{2}\tau^2$ grows quadratically, and the tax base $\gamma(\tau)$ shrinks (since the platform reduces amplification), so the revenue $\tau\gamma$ follows a Laffer‑type curve. Beyond the optimal rate $\tau^*$, these costs outweigh the epidemiological gains, and $J(\tau)$ rises. The government therefore has no incentive to drive the platform out of the market: a strictly positive but finite tax maximises social welfare while preserving a viable source of public revenue. This insight is particularly relevant for policymakers concerned that digital taxes might stifle innovation or force platforms to exit.

Third, the existence of a bistable regime with hysteresis has profound implications for the timing of regulatory intervention. In the absence of a tax, the system can be permanently locked into a high‑hate equilibrium by a temporary shock---a viral hate speech event, a political crisis, or the sudden emergence of a particularly influential hateful account. Once trapped, even removing the initial shock is insufficient to return to the low‑hate state; the system exhibits hysteresis. A tax that brings reactivity below the critical threshold $\gamma_c$ mitigates this risk. This underscores the need for \textit{proactive}, rather than reactive, regulation. Waiting until the system is already in the high‑hate state makes intervention far more costly.

Fourth, the sensitivity analysis confirms that these findings are robust to parameter uncertainty. Even under conservative assumptions about the social cost of hate speech ($c_H$) or the platform's revenue from amplification ($a/b$), the optimal tax remains positive and effective. The policy recommendation does not hinge on a particular calibration.

Our findings speak directly to current regulatory developments. The EU Digital Services Act (DSA)~\cite{ec2025code} requires platforms to conduct risk assessments and provide transparency reports, but stops short of direct financial disincentives for algorithmic amplification. The UK Online Safety Bill imposes duties of care, but again relies primarily on transparency and accountability rather than price instruments. Our model suggests that these transparency‑based approaches, while valuable, may be insufficient: without a price on the externality, the platform's profit‑maximising reactivity remains socially excessive. A tax, calibrated to the social cost of amplification, would provide a direct financial incentive for platforms to redesign their algorithms in ways that reduce the spread of hate speech.

The implementation of such a tax is not without challenges. Measuring a platform's algorithmic reactivity $\gamma$ in practice would require access to internal data on amplification algorithms, which platforms are currently reluctant to share. However, the DSA's provisions for data access by vetted researchers~\cite{ec2025code} could provide the necessary infrastructure. Alternatively, the tax could be based on observable proxies, such as the volume of hate speech removed per reporting period (already required under the Code of Conduct+) or the engagement metrics of flagged content. The tax rate could be set by an independent regulatory body, analogous to how environmental taxes are calibrated to the marginal social cost of pollution. Moreover, the tax revenue could be earmarked for funding counter‑narrative campaigns, victim support services, and digital literacy programmes, creating a virtuous cycle in which the proceeds of the tax are used to further reduce the harm caused by hate speech.

A distinctive feature of our model is the engagement with the ethical critique of Big Tech articulated by Pope Leo XIV in his May 2025 encyclical \textit{Magnifica Humanitas}~\cite{leoxiv2025magnifica}. The encyclical denounces three structural failures of engagement‑driven platforms, each of which finds a formal counterpart in our model.

First, the encyclical condemns the ``dictatorship of the algorithm'': engagement‑maximizing recommender systems do not neutrally reflect user preferences but actively shape them, amplifying divisive content because it generates higher engagement. In our model, this mechanism is precisely captured by the platform's choice of $\gamma$, which directly increases the effective transmission rate $\beta = \beta_0(1+\phi\gamma)$. The laissez‑faire equilibrium $\gamma_0=0.43$ generates socially excessive levels of hate speech and victimisation because the platform does not internalise the costs $c_I I$ and $c_H H$. The optimal tax $\tau^*$ corrects this, re‑establishing what the encyclical calls ``the primacy of human dignity over profit.''

Second, the encyclical denounces the ``commodification of human attention'' as a violation of personal dignity. In our model, the term $a I(P_0+\gamma)$ in the platform's profit function captures exactly this: attention, measured by engagement with hateful content, is converted into revenue. The platform profits from the attention generated by hate speech, while the victims bear the costs. The optimal tax, by reducing the platform's incentive to amplify, partially de‑commodifies attention and restores a measure of dignity to the users who would otherwise be exposed to harmful content.

Third, the encyclical warns against the concentration of communicative power in the hands of a few corporations. Our Stackelberg game formalises this power asymmetry: a single platform chooses $\gamma$, and millions of users bear the consequences. The government, representing the collective interest, can only influence the platform indirectly through the tax. The fact that the optimal tax is positive, and that it substantially reduces victim harm, demonstrates that even in a world of highly concentrated platform power, well‑designed regulation can protect the vulnerable.

The alignment between the encyclical's ethical principles and our formal results is not coincidental. It reflects a growing consensus---from Catholic social teaching to secular regulatory initiatives---that engagement‑driven algorithms impose real and measurable harms. Our model provides a rigorous framework to quantify these harms and to design policies that honour what \textit{Magnifica Humanitas} calls ``the magnificent humanity that no algorithm can capture and no profit can justify.''

We acknowledge several limitations of our analysis. First, the model assumes a homogeneous population and abstracts from network structure. In reality, hate speech propagates through social networks with heterogeneous connectivity, and some individuals are far more influential than others. Future work could incorporate network effects, possibly drawing on the superspreader literature~\cite{deverna2024identifying}. Second, the platform is modelled as a monopolist; a more realistic setting would involve multiple competing platforms, which could be analysed using differential game theory. Competition might attenuate or exacerbate the amplification externality, depending on whether platforms compete on engagement or on safety. Third, the tax is assumed constant over time; a dynamically adjusted tax that responds to the current prevalence of hate speech could be more efficient, though possibly harder to implement. Fourth, the victim dynamics are deliberately simple ($\dot H = \theta I - \xi H$); future work could incorporate richer behavioural mechanisms, such as the role of counter‑speech, social norms, or the psychological impact of victimisation. Fifth, empirical calibration of the model's parameters using micro‑data from platform APIs remains an important next step. Finally, the model abstracts from the potential unintended consequences of taxation, such as the migration of hate speech to less regulated platforms or the encryption of content to evade detection.

\section{Conclusion}
\label{sec:conclusion}

This paper developed a tractable epidemiological model of platform-driven hate speech that captures the strategic interaction between a profit-maximising platform and a welfare-maximising government. The key novelties of our approach are threefold. First, we modelled victims as directly proportional to the prevalence of hate speech, providing a simple yet realistic link between the epidemic and its social harm. Second, we specified a platform profit function that depends linearly on both the endemic prevalence of hate speech and the intensity of algorithmic amplification, ensuring an interior solution for the platform's optimal reactivity. Third, we included both the deadweight loss of taxation and the fiscal benefit of tax revenues in the government's objective, obtaining a U‑shaped social cost function with a finite, interior optimal tax that balances the costs and benefits of intervention.

The analytical results established the basic reproduction number, the existence of a supercritical transcritical bifurcation at $R_0=1$, and the emergence of bistability for sufficiently high platform reactivity. We showed that the bistability region can generate hysteresis, locking the system into a high‑hate equilibrium, and that the optimal tax, by reducing reactivity below the critical threshold, can eliminate this risk. The optimal tax was characterised through a first‑order condition that balances the marginal cost of taxation against its marginal social benefit, and computed numerically via a robust grid‑search algorithm.

Numerical simulations confirmed these theoretical findings. Under the baseline calibration, the optimal tax substantially reduces endemic prevalence and victimisation. The social cost function exhibits a clear U‑shape, illustrating the trade‑off between the benefits of reducing hate speech and the costs of raising public funds. Sensitivity analysis demonstrated that these results are robust to substantial parameter variation: the optimal tax remains positive and effective across the entire range of parameters considered.

Our results speak directly to the current regulatory debate. The EU Digital Services Act and the UK Online Safety Bill represent important steps toward holding platforms accountable, but their reliance on transparency and risk assessment may be insufficient. A tax on algorithmic amplification, calibrated to the social cost of hate speech, would provide a direct financial incentive for platforms to redesign their algorithms in socially beneficial ways. The practical feasibility of such a tax depends on the availability of data on algorithmic amplification, which the DSA's data access provisions could facilitate. The tax revenue could be earmarked for funding digital literacy, victim support, and counter‑narrative initiatives, creating a self‑reinforcing policy framework.

As Pope Leo XIV emphasizes in \textit{Magnifica Humanitas}, any technology, algorithm, or innovation must be judged by its capacity to uphold the profound, inherent dignity of the human person, who is created in the image and likeness of God. Our model provides a formal language in which to translate that ethical imperative into policy: the optimal tax is the price of dignity in the digital age.

\vspace{1cm}
\textbf{Declaration of Interest}. The authors declare that they have no known competing financial interests or personal relationships that could have appeared to influence the work reported in this paper.

\textbf{Funding Declaration}
The authors received no  funding for this work.

\end{document}